\documentclass[runningheads]{llncs}
\usepackage{graphicx}

\graphicspath{{figures/}}

\usepackage[noend,ruled,linesnumbered]{algorithm2e}
\SetKwRepeat{Do}{do}{while}
\usepackage{amsmath}
\usepackage{amssymb}
\usepackage{booktabs}
\usepackage[utf8]{inputenc}
\usepackage{cite}
\usepackage{hyperref}
\hypersetup{
hidelinks,
colorlinks=true,
citecolor=[rgb]{0.121 0.47 0.705},
linkcolor=[rgb]{0.121 0.47 0.705},
urlcolor=[rgb]{0.121 0.47 0.705}
}
\usepackage[basic]{complexity}
\usepackage{enumerate}
\usepackage{pifont}
\usepackage{refcount}
\usepackage{wasysym}
\usepackage[dvipsnames]{xcolor}
\usepackage{xspace}
\usepackage{doi,url}

\spnewtheorem{observation}{Observation}{\bfseries}{\itshape}

\usepackage{thm-restate} 

\newcommand*{\mysc}[1]{{\normalfont\textsc{#1}}\xspace}
\newcommand*{\EVEN}{\mysc{even}}
\newcommand*{\LOW}{\mysc{low}}
\newcommand*{\MID}{\mysc{mid}}
\newcommand*{\HIGH}{\mysc{high}}

\newcommand*{\Pu}{P}
\newcommand*{\VA}{V_{\!A}}

\newcommand{\etal}{et~al.\xspace}

\makeatletter
\newcommand\nextitem[1]{%
  \setcounter{\@enumctr}{#1}%
  \addtocounter{\@enumctr}{-1}%
}
\newcommand\nextitemref[1]{%
  \setcounterref{\@enumctr}{#1}%
  \addtocounter{\@enumctr}{-1}%
}
\makeatother

\setcounter{bottomnumber}{50}

\renewcommand{\subsubsection}[1]{\par\medskip\noindent\textbf{#1.}\enskip}

\begin{document}

\title{Short Plane Supports for Spatial Hypergraphs}
\titlerunning{Short Plane Supports for Spatial Hypergraphs}

\author{%
    Thom~Castermans\inst{1}
    \and
    Mereke~van~Garderen\inst{2}
    \and
    Wouter~Meulemans\inst{1}
    \and
    Martin~N\"ollenburg\inst{3}
    \and
    Xiaoru~Yuan\inst{4}}

\authorrunning{T. Castermans et al.}
\institute{%
    TU Eindhoven, the Netherlands
    \email{[t.h.a.castermans, w.meulemans]@tue.nl} \and
    Universit\"at Konstanz, Germany
    \email{mereke.van.garderen@uni-konstanz.de} \and
    TU Wien, Vienna, Austria
    \email{noellenburg@ac.tuwien.ac.at} \and
    Peking University, Beijing, China
    \email{xiaoru.yuan@pku.edu.cn}}

\maketitle 

\begin{abstract}
A graph $G=(V,E)$ is a \emph{support} of a hypergraph $H=(V,S)$ if every hyperedge induces a connected subgraph in $G$. Supports are used for certain types of hypergraph visualizations.
In this paper we consider visualizing \emph{spatial} hypergraphs, where each vertex has a fixed location in the plane.
This is the case, e.g., when modeling set systems of geospatial locations as hypergraphs.
By applying established aesthetic quality criteria we are interested in finding supports that yield plane straight-line drawings with minimum total edge length on the input point set~$V$.
We first show, from a theoretical point of view, that the problem is \NP-hard already under rather mild conditions as well as a negative approximability results.
Therefore, the main focus of the paper lies on practical heuristic algorithms as well as an exact, ILP-based approach for computing short plane supports.
We report results from computational experiments that investigate the effect of requiring planarity and acyclicity on the resulting support length.
Further, we evaluate the performance and trade-offs between solution quality and speed of several heuristics relative to each other and compared to optimal solutions.

\end{abstract}

\section{Introduction}
\label{sec:intro}
A \emph{hypergraph} $H = (V,S)$ is a generalization of a graph, in which each hyperedge in $S$ is a nonempty subset of the vertex set $V$, that is, $S \subseteq \mathcal{P}(V) \setminus \{ \emptyset \}$.
Furthermore, we assume here that every element $v \in V$ is in at least one hyperedge $s \in S$.
Hypergraphs arise in many domains to model set systems representing clusters, groups or other aggregations.
To allow for effective exploration and analysis of such data, visualization is often used.
Indeed, drawing hypergraphs relates to set visualization, an active subfield of information visualization (see the recent survey of Alsallakh~\etal~\cite{Alsallakh2016}).
Various methods have been developed to visualize set systems for elements fixed in (geo)spatial positions, such as Bubble Sets \cite{bubblesets}, LineSets \cite{Alper2011}, Kelp Diagrams \cite{Dinkla2012} and Kelp Fusion \cite{Meulemans2013}.
These methods make different trade-offs between, e.g., Gestalt theory and Tufte's principle of ink minimization \cite{tufte} to visually convey the set structures; user studies have been performed to analyze the effectiveness of such trade-offs \cite{Meulemans2013}.

\begin{figure}[t]
  \includegraphics[page=1]{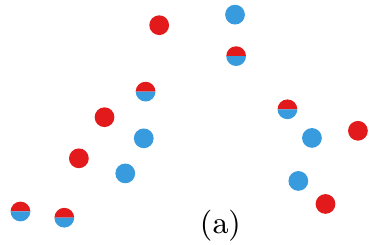}
  \hfill
  \includegraphics[page=2]{problem}
  \hfill
  \includegraphics[page=3]{problem}
  \vspace{-\baselineskip}
  \caption{(a) A set system with colors indicating set membership. (b) The shortest plane support of the corresponding hypergraph. (c) A Kelp-style rendering of the set system.}
  \label{fig:problem}
\end{figure}

An important concept to model the drawing of hypergraphs is that of a hypergraph support \cite{supportfirst}:
a \emph{support} of a hypergraph $H = (V,S)$ is a graph~$G = (V,E)$ such that every hyperedge $s \in S$ induces a connected subgraph in~$G$.
In other words, for every hyperedge $s$, the restriction of $G$ to only edges that connect vertices in $s$, denoted $G[s]$, is connected and spans all vertices in $s$.
Hypergraph supports correspond to a prominent visualization style for geospatial sets, namely that of connecting all elements of a set using colored links, such as seen in Kelp-style diagrams \cite{Dinkla2012,Meulemans2013} (see also Fig.~\ref{fig:problem}) or LineSets \cite{Alper2011}.
Thus, finding an embedded support that satisfies certain criteria readily translates into a good rendering of the spatial set system.
A ``good'' support should avoid edge crossings, a standard quality criterion in the graph-drawing literature \cite{Purchase2002}.
Moreover, as per Tufte's principle of ink minimization \cite{tufte}, it should have small total edge length.
Of course, one may argue that edges of the support that are used by multiple hyperedges do not significantly reduce the ``ink'' and thus multiplicity should be considered.
However, we observe that such edges show co-occurrences of elements and thus have a potential added value in the drawing---user studies that establish the validity of this reasoning are beyond the scope of this paper.
The shortest support need not be a tree, but to further build on this idea of co-occurrences, one may want to restrict the support to be acyclic---a support tree.

In many applications, the vertices have some associated (geo)spatial location, thereby prescribing their positions in the drawing of the support.
We focus on this case where vertices have fixed positions in the plane and study supports that are embedded using straight-line edges.
Fig.~\ref{fig:realdata} shows an example on real-world data of restaurants, similar to those used in \cite{Meulemans2013}.

\subsubsection{Contributions}
The contributions of this paper are two-fold: on the one hand we fill some gaps in theoretical knowledge about computing plane supports and support trees; on the other hand, we perform computational experiments to gain more insight into the trade-offs on the complexity of the visual artifact for (implicit) support-based set visualization methods. Our focus is on the latter.

In Section~\ref{sec:theory} we explore computational aspects of the problem and introduce our algorithms.
We observe that plane support trees always exist if at least one vertex is contained in all hyperedges, but show that length minimization is \NP-hard.
Moreover, the natural approach to extend a minimum spanning tree does not even yield a constant-factor approximation.
Finally, we present two heuristics, one based on local search, the other on iteratively computing minimum spanning trees, as well as an exact integer linear program (ILP).

\begin{figure}[t]
  \includegraphics[page=1]{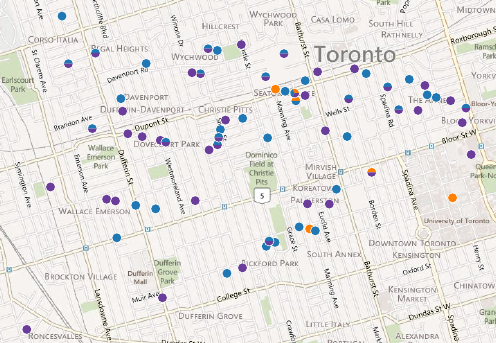}
  \hfill
  \includegraphics[page=2]{toronto_filtered_d-dd_4-5_japanese}
  \hfill
  \includegraphics[page=4]{toronto_filtered_d-dd_4-5_japanese}
  \vspace{-\baselineskip}
  \caption{A set system of restaurants in downtown Toronto: input memberships and locations (left) and a Kelp-style rendering of the shortest plane support (right).}
  \label{fig:realdata}
  \vspace{-\baselineskip}
\end{figure}

In Section~\ref{sec:experiments} we describe the results of two computational experiments.
The first experiment compares the performance of the two heuristic algorithms in terms of quality and speed.
Whereas the local search achieves better quality, the approximation algorithm is faster.
The second experiment compares how well these algorithms perform compared to the optimum, computed via the ILP, and investigates the cost in terms of edge length incurred by requiring planarity or acyclicity.
The effect of planarity and acyclicity seems to be predictably influenced by the number of hyperedges and the number of incident hyperedges per vertex, but not by the number of vertices.
Moreover, the experiment shows that local search often achieves an optimal result.

\subsubsection{Related work}
Regarding supports for elements with fixed locations, some results are already known.
The results of Bereg~\etal~\cite{redbluetrees} imply that existence of a plane support tree for two disjoint hyperedges can be tested in polynomial time; this implies the same result for a plane support.
This problem has also been studied in a setting with additional Steiner points~\cite{steiner,efrat2014mapsets}.
Van Goethem~\etal~\cite{paintersarxiv} enforce a stricter planarity than that of planar supports and investigate the resulting properties for elements on a regular grid, where only neighboring elements can be connected. However, solution length is of no concern in their results.

Without the planarity requirement, existence and length minimization of a (nonplane) support tree for fixed elements can be solved in polynomial time~\cite{treemin,kmn-mtshled-14}.
Hurtado~\etal~\cite{coloredspanninggraphs} show that length minimization of a support for two hyperedges is solvable in polynomial time.
However, for three or more hyperedges this problem is \NP-hard~\cite{multicolor}.
We show that this is in fact hard for two hyperedges if we do require planarity.

Planar supports without fixed elements have also received attention.
Johnson and Pollak~\cite{supportfirst} originally showed that deciding whether a planar support exists is \NP-hard; various restrictions have since been proven to be \NP-hard (e.g., \cite{planarsupports}). Contrasting these reductions, our hardness result (Theorem~\ref{thm:nphard}) requires only two hyperedges, but uses length minimization.
Buchin~\etal~\cite{planarsupports} show that testing for a planar support tree with bounded maximum degree is solvable in polynomial time; testing for a planar support tree such that the induced subgraph of each hyperedge is Hamiltonian can also be done in polynomial time~\cite{brandes}.

Various set-visualization methods \cite{Alper2011,Dinkla2012,Meulemans2013} implicitly also compute supports, considering various criteria such as length, detour, shape, crossings, and bends.

\section{Computing short plane supports}
\label{sec:computation}

We first describe our theoretical results. Omitted proofs are in Appendix~\ref{app:proofs}.

\label{sec:theory}
\subsubsection{Existence}
The observation below gives a sufficient condition for the existence of a plane support tree. Bereg~\etal~\cite{redbluetrees} provide a necessary condition for $|S| = 2$, though the problem remains open for $|S| > 2$.

\begin{observation}\label{obs:star}
	Consider a hypergraph $H=(V,S)$ with no three vertices in $V$ on a line, such that $\VA = \bigcap_{s \in S} s \neq \emptyset$. Then $H$ has a plane support tree.
\end{observation}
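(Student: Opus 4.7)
The plan is to exhibit an explicit plane support tree, namely a geometric star centered at any vertex common to all hyperedges.

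More concretely, I would first pick any vertex $a \in \VA$; this is possible because $\VA \neq \emptyset$ by hypothesis. Then I would define $G = (V,E)$ with $E = \{\, \{a,v\} : v \in V \setminus \{a\}\,\}$, drawn with straight-line segments at the prescribed positions of the vertices. As a combinatorial object $G$ is clearly a tree (a star), so acyclicity and connectivity come for free.

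Next I would verify that $G$ is a support. For every $s \in S$ we have $a \in s$ since $a \in \VA \subseteq s$. Hence $G[s]$ is the star with center $a$ and leaves $s \setminus \{a\}$, which is connected and spans all of $s$; this is exactly the support condition.

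Finally I would argue plane-ness of the straight-line drawing. The only way two distinct edges $\{a,u\}$ and $\{a,v\}$ of a star could share a point other than $a$ is if the three points $a, u, v$ were collinear with one of $u,v$ lying on the segment bounded by $a$ and the other; the no-three-collinear assumption rules this out. So no two edges cross or overlap, and the drawing is plane.

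The only potentially delicate step is the planarity argument, but with the no-three-collinear hypothesis it reduces to the one-line observation above; there is essentially no real obstacle in this proof, which is why the authors phrase the statement as an observation rather than a theorem.
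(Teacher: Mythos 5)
Your proof is correct, but it is not the construction the paper uses. You build a star centered at an arbitrary $a \in \VA$; the paper instead takes a Euclidean minimum spanning tree on $\VA$ and attaches each vertex of $V \setminus \VA$ to its nearest vertex in $\VA$. Both constructions yield a plane support tree, and your verification of the three required properties (tree, support, planarity) is complete --- in particular, the observation that two star edges sharing the center $a$ can only overlap if $a$, $u$, $v$ are collinear is exactly right, and in fact you only need the absence of collinear triples \emph{through $a$}, a slightly weaker hypothesis than the paper assumes. What the paper's more elaborate construction buys is a much shorter tree: the same EMST-plus-nearest-attachment procedure is reused verbatim as the initial solution for the \textsc{LocalSearch} heuristic in Section~\ref{sec:localsearch}, where starting from a long star would be a poor choice. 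The trade-off is that the paper's planarity argument is less self-evident than yours (one must argue that EMST edges and nearest-point connections cannot cross), whereas the star's planarity is a one-line consequence of general position. As a pure existence proof, your argument is the more elementary of the two.
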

\begin{proof}
We use the Euclidean minimum spanning tree on $\VA$ and connect each vertex in $V\setminus \VA$ to the closest one in $\VA$.
This readily yields a support tree; it is plane as no crossings are created when connecting to the closest point in $\VA$ and no overlaps are created in the absence of collinear points.
\qed
\end{proof}

Without a vertex in $\VA$, one can immediately construct instances that enforce a crossing in any support, e.g., an X-configuration of two disjoint hyperedges.

\subsubsection{Approximation}
In a support tree the subgraph induced by $\VA$ must be a connected subtree to satisfy the support property for all hyperedges.
Next we consider  using the above idea to start with an Euclidean minimum spanning tree (EMST) of $\VA$ and extend it to a support tree.
Though this leads to an approximation algorithm for two hyperedges~\cite{coloredspanninggraphs} if we allow intersections, we show below that the planarity requirement can cause the resulting support length to exceed any constant factor of the length of the shortest plane support tree.

\begin{restatable}{lemma}{lemapprox}
    \label{lem:approx}
    There is a family of $n$-vertex hypergraphs $H=(V, \{r,b\})$ with $\VA = r \cap b \ne \emptyset$ such that any plane support of $H$ that includes an EMST of $\VA$ is a factor $\Theta(|V|)$ longer than the shortest plane support tree.
\end{restatable}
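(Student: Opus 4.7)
The plan is to construct the family $\{H_n\}$ explicitly by placing a constant-size purple set $V_A$ together with $\Theta(n)$ red-only and blue-only vertices in a configuration that exploits planarity. The guiding idea is to make the straight-line embedding of the EMST of $V_A$ into a geometric ``barrier'' that obstructs the natural way of connecting the remaining vertices, while a slightly more expensive spanning tree on $V_A$ does not create such a barrier.

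First I would choose positions for $V_A$ (for example, $|V_A|=3$ vertices forming a thin triangle) so that the EMST of $V_A$ is uniquely determined and, drawn as straight-line segments, forms a ``wedge'' consisting of two long edges meeting at a common apex. Second, I would place the $\Theta(n)$ non-$V_A$ vertices in a tight cluster positioned relative to the wedge so that their ``natural chain'' -- a short polyline through all of them of total length $O(1)$ -- lies along a segment that crosses each EMST edge many times. In any plane support containing the EMST, this chain is shattered into $\Theta(n)$ pieces by the wedge; each piece then requires its own attachment edge of length $\Omega(1)$ to a purple vertex, because planarity prevents any chain-repair edge from crossing an EMST edge. Summed over the pieces, this forces total support length $\Omega(n)$.

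Third, I would exhibit an explicit alternative plane support tree that uses the third side of the triangle (of length only a constant factor larger than an EMST edge) in place of one EMST edge as the spanning tree on $V_A$. This alternative spanning tree's straight-line drawing leaves a channel along which the entire cluster chain can be routed without crossings, so its extensions consist of the $O(1)$-long chain together with only $O(1)$ attachments. The resulting plane support tree has total length $O(1)$, and taking the ratio with the $\Omega(n)$ lower bound above yields the claimed $\Theta(|V|)$ blow-up.

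The main obstacle is to establish the lower bound for \emph{any} plane support containing the EMST, rather than just the natural greedy extension; in particular, one must rule out that edges running between non-$V_A$ vertices could provide shortcuts that bypass the $\Omega(n)$ cost. I expect to resolve this by a planarity/topological argument: the bounded regions cut off by the wedge act as ``traps'' from which red- or blue-connectivity to the rest of the support can escape only through the wedge's purple vertices, and, because each bounded region contains $\Theta(1)$ non-$V_A$ vertices, each escape costs $\Omega(1)$ and must occur $\Theta(n)$ times. Verifying this rigorously -- including checking that no clever re-use of escape edges can amortize the cost -- is the delicate step on which I would focus the technical work.
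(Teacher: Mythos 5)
Your high-level plan---turn the straight-line drawing of the EMST of $\VA$ into a barrier through a tight cluster of the remaining vertices, and show that a slightly longer spanning tree on $\VA$ avoids the barrier---is the same as the paper's (there $\VA=\{u,v,w\}$, the barrier is the single long EMST edge $uv$ with the cluster in an $\varepsilon$-disk on that edge, and the cheap alternative routes $\VA$ through $w$ and connects the cluster by two comb structures). However, your lower-bound argument has a genuine gap. Shattering the cluster's ``natural chain'' into $\Theta(n)$ pieces does not force $\Theta(n)$ expensive attachment edges: any two pieces lying on the same side of a barrier segment can be reconnected by a straight edge between two of their vertices, and such an edge stays in the same half-plane, hence cannot cross that segment, and has length at most the (tiny) cluster diameter. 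A wedge of two segments through a tight cluster partitions it into only $O(1)$ sectors, so $O(1)$ long escape edges plus $O(\varepsilon n)$ worth of short intra-sector edges already restore connectivity, and the ratio collapses to $O(1)$. Relatedly, the ``bounded regions act as traps'' argument cannot be made rigorous as stated: two segments sharing an endpoint (indeed any tree of segments) bound no region of the plane, so nothing is topologically trapped.

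The missing ingredient is that the \emph{colors}, not the geometry alone, must forbid these cheap same-side reconnections. In the paper's construction the cluster vertices alternate between red-only and blue-only along mirrored convex chains hugging $uv$; because red and blue points interleave in convex position, any crossing-free choice of red--red and blue--blue edges within one chain leaves $\Omega(n)$ monochromatic components (the classical red/blue spanning-tree obstruction), and since no edge may cross $uv$, each such component must reach $\VA$ by its own edge of length $\Theta(\ell)$ to $u$ or $v$. That is what yields the $\Theta(n)\cdot\ell$ lower bound against the $\Theta(1)\cdot\ell$ alternative in which $\VA$ is connected via $w$. Without an interleaving/convex-position argument of this kind, your construction does not produce a super-constant gap, so the delicate step you correctly flag as the crux is in fact not resolvable by the topological argument you propose.
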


\begin{proof}[sketch]
The family is drawn in Fig.~\ref{fig:approx}. The convex chains force the support with length $\Theta(n) \cdot \ell$ when the EMST on $\VA$ is used. Using a different tree on $\VA$ can give a total of length $\Theta(1) \cdot \ell$.
\qed
\end{proof}

\begin{figure}[b]
  \centering
  \includegraphics{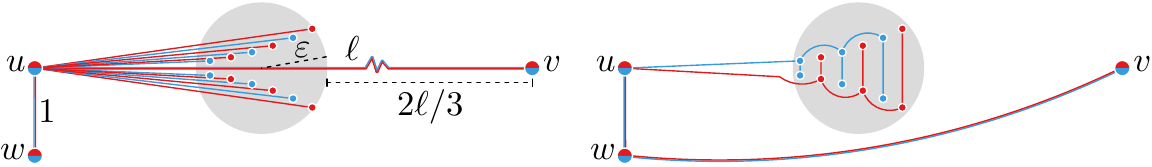}
  \vspace{-0.8\baselineskip}
  \caption{An $n$-point instance with approximation ratio $\Theta(n)$ if using an EMST on $\VA$. All edges are straight-line segments; curvature emphasizes the effect of the convex chain.}
  \label{fig:approx}
\end{figure}

Removing vertex $w$ from construction in Fig.~\ref{fig:approx}, we can similarly show that a plane support tree, which now necessarily includes the edge $uv$, is a factor $\Theta(n)$ longer than a shortest nonplane support tree. 

\begin{corollary}\label{cor:plane}
	There is a family of $n$-vertex hypergraphs $H=(V,\{r,b\})$ with $\VA = r \cap b \ne \emptyset$ such that any plane support tree of $H$ is a factor $\Theta(n)$ longer than the shortest nonplane support tree.
\end{corollary}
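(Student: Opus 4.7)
The plan is to reuse the construction from Figure~\ref{fig:approx} in the proof of Lemma~\ref{lem:approx}, modified by deleting the central vertex $w$. Call the two distinguished vertices of $\VA$ that were connected through $w$ by $u$ and $v$; after removing $w$, these are the only pair of $\VA$-vertices that still need to be connected to each other directly through the ``interior'' of the instance. Since $\VA \subseteq r \cap b$, any support tree must induce a subtree on $\VA$, so in particular there must be a $u$--$v$ path in the support. First I would argue that in any \emph{plane} support tree the edge $uv$ itself must be present. The argument is that the convex chains of $r$-only and $b$-only vertices that flank $uv$ in the construction leave no planar alternative: any $u$--$v$ path routed via the chain vertices would have to leave and re-enter the convex region, and combining this with the required connectivity of each $r$-chain and each $b$-chain to the rest of $\VA$ forces a crossing. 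Taking $uv$ then forces the two convex chains to be hooked onto $u$ and $v$ via long ``fan'' edges, yielding total length $\Theta(n) \cdot \ell$ exactly as in the proof of Lemma~\ref{lem:approx}.

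Next I would exhibit a \emph{nonplane} support tree of length $\Theta(1) \cdot \ell$. Here we simply drop the planarity requirement and route the subtree on $\VA$ along the short path that the original construction used through $w$'s former position: connect $u$ and $v$ via a zig-zag that uses the chain vertices themselves as intermediate $\VA$-neighbors' anchor points (allowing the edges to cross the chains). Concretely, one takes the same short tree as in Lemma~\ref{lem:approx}, but instead of being routed through $w$ one reroutes through chain vertices; because crossings are permitted, each chain can be spanned from nearby anchor points using short edges, giving constant total length relative to~$\ell$.

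Putting these two bounds together yields the ratio $\Theta(n)$, which is the content of the corollary. The main obstacle, and the only step that is not already covered verbatim by the proof of Lemma~\ref{lem:approx}, is the first one: verifying that once $w$ is removed, \emph{every} plane support tree (not only those extending an EMST of $\VA$) is forced to include $uv$ or an equally expensive substitute. I expect this to follow from a short planarity argument based on the convex position of the chain vertices and the fact that $\VA \setminus \{u,v\}$ lies entirely on one side of the line through $uv$ in the construction, so that any planar $u$--$v$ path avoiding $uv$ must enclose part of one chain and thereby block the mandatory connections of the other hyperedge. Once this is established, the length calculations are identical to those in Lemma~\ref{lem:approx} and Figure~\ref{fig:approx}.
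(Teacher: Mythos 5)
There is a genuine gap, and it sits exactly at the step you single out as ``the main obstacle'' plus in your nonplane construction. The fact that edge $uv$ must appear in every support tree of the modified instance needs no planarity argument at all: after deleting $w$ we have $\VA = r \cap b = \{u,v\}$, and in any support \emph{tree} $T$ the unique $u$--$v$ path must lie entirely in $T[r]$ and entirely in $T[b]$, hence all its internal vertices would have to belong to $r \cap b = \{u,v\}$; a single-colored chain vertex on that path would disconnect $T[r]$ or $T[b]$. So the path is the edge $uv$ itself, in \emph{every} support tree, plane or not. This is precisely the paper's remark that a support tree induces a connected subtree on $\VA$, and it is why the paper can dispose of this step in half a sentence (``which now necessarily includes the edge $uv$''). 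Your proposed geometric argument (leaving and re-entering the convex region, blocking the other hyperedge's connections) is both unnecessary and left unverified.

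The same observation breaks your second step: the short \emph{nonplane} support tree cannot ``connect $u$ and $v$ via a zig-zag that uses the chain vertices'', because such a tree is not a support tree --- the unique $u$--$v$ path would pass through $r$-only or $b$-only vertices and disconnect the induced subgraph of the other hyperedge. The correct short nonplane tree still contains the edge $uv$; what makes it short is that the two alternating convex chains inside the $\varepsilon$-disk can each be spanned by short same-color chords (the comb structures of Fig.~\ref{fig:approx}(b)) attached to $u$ or $v$ by $O(1)$ anchor edges, at total cost $\Theta(1)\cdot\ell$ --- these chords cross $uv$ and/or each other, which is exactly what planarity forbids. Under planarity, the presence of $uv$ forces each of the $\Theta(n)$ chain vertices to be fanned out to $u$ at cost $\Theta(\ell)$ apiece, giving the $\Theta(n)$ ratio. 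With these two repairs your outline matches the paper's intended argument; as written, however, the pivotal step is proved by the wrong (and incomplete) mechanism and the exhibited cheap tree is not a valid support tree.
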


\label{sec:hardness}
\subsubsection{Computational complexity}
Unfortunately, finding the shortest plane support and several restricted variants are \NP-hard, as captured in the theorem below.
It uses a fairly straightforward reduction from planar monotone 3-SAT~\cite{l-pftu-82}.

\begin{restatable}{theorem}{thmnphard}
\label{thm:nphard}
Let $H = (V,\{r, b\})$ be a hypergraph with vertices $V$ having fixed locations in~$\mathbb{R}^2$ and with $r \subseteq b$ or $r \cap b =\emptyset$. It is \NP-hard to decide whether~$H$ admits a plane support tree with length at most $L$ for some $L > 0$.
\end{restatable}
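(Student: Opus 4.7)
The plan is to reduce from planar monotone 3-SAT, which is \NP-hard~\cite{l-pftu-82}. Given such a formula $\varphi$ embedded with its variables on a horizontal line, all positive clauses above and all negative clauses below, I would build in polynomial time a point set $V \subset \mathbb{R}^2$ together with hyperedges $r,b$ satisfying either $r \subseteq b$ or $r \cap b = \emptyset$, and a budget $L > 0$, such that the resulting hypergraph admits a plane support tree of length at most $L$ exactly when $\varphi$ is satisfiable. The monotone planar embedding of $\varphi$ serves as a blueprint for the geometric placement of gadgets, ensuring that the wire-and-gadget structure can be realized without crossings on a small scale.

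Gadget-wise I would use: a variable gadget shaped like a thin loop of points that admits essentially two cheapest spanning-tree configurations, identified with the two truth values, with the left (resp.\ right) side of the loop exposing upward ports to positive clauses (resp.\ downward ports to negative clauses); wire gadgets consisting of long, thin chains that propagate the chosen orientation, with the property that any plane support tree must follow the chain almost rigidly; and clause gadgets where three wires meet at a junction, designed so that at least one incoming wire must be in its satisfying state, otherwise the support cannot cheaply connect the junction without an extra long edge. Colorings are chosen so that the required connectivity of $r$ and $b$ enforces the intended local structure; the two regimes in the theorem are handled by small variations of the same construction. In the $r \subseteq b$ regime, set $b = V$ and use $r$ as a highlighted subset whose subtree-connectivity requirement carries the logic. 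In the $r \cap b = \emptyset$ regime, partition $V$ between the two hyperedges so that all but one tree edge lives inside a single color class, and let this unique interface edge be forced to a specific, short location inside each gadget.

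Once the construction is fixed, I would set $L$ to the total length attained by every gadget in its intended cheapest configuration. If $\varphi$ is satisfiable, choosing variable-gadget configurations according to a satisfying assignment and the designated satisfying wire in each clause yields a plane support tree of length at most $L$. Conversely, any plane support tree of length at most $L$ must realize the cheapest local structure inside every gadget, because any local deviation---using the wrong side of a variable loop, flipping a wire, or bypassing a clause junction---incurs a strictly positive length penalty that uses up the slack baked into $L$; reading off the orientations of the variable gadgets then produces a satisfying truth assignment for $\varphi$.

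The main obstacle will be the rigidity analysis. Since only two hyperedges are available, color cannot separate logically independent components, so the geometry must do essentially all of the work. The delicate part is proving the \emph{no global cheating} property: no plane support tree of length at most $L$ can shortcut through unrelated parts of the instance---for example, by rerouting a wire through a neighboring clause gadget, or by exploiting a long but straight alignment of points across two gadgets---to outperform the intended assignment-based construction. This requires laying the gadgets out at sufficiently separated scales so that any nonlocal rearrangement is strictly longer than the cumulative small local penalties caused by even a single unsatisfied clause, while still keeping both the coordinate magnitudes and the number of points polynomial in the size of $\varphi$.
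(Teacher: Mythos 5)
Your starting point matches the paper: both reduce from planar monotone 3-SAT with a tight length budget $L$, using the monotone planar embedding as a blueprint and encoding the truth assignment in which of two cheapest local configurations a plane support tree of length at most $L$ is forced to choose. However, what you have written is a plan rather than a proof, and the part you explicitly defer---the rigidity/``no global cheating'' analysis, together with the actual design of variable, wire, and clause gadgets that work with only two hyperedges---is precisely where all the difficulty lives. In particular, your scheme relies on wire gadgets that ``propagate the chosen orientation'' of a variable loop to a clause junction; with only two colors available, it is not at all clear that such truth-propagating wires can be realized, since the support-connectivity constraints of $r$ and $b$ cannot isolate one wire from another, and you give no mechanism by which a plane support tree is forced to carry a bit of information along a chain of points. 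As it stands, the conditional structure (``designed so that\ldots'', ``with the property that\ldots'') asserts the existence of gadgets whose existence is the theorem's actual content.

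It is worth seeing how the paper's construction sidesteps these obstacles entirely, because it suggests that the wire-based architecture is the wrong decomposition here. In the paper, there are no wires and no propagating state: the truth value of variable $v_i$ is encoded solely by which single long horizontal edge (at $y=1$ or $y=-1$) connects consecutive triplets of red vertices, and each clause is a rigid comb of blue vertices at unit spacing whose internal MST cost is forced by a simple counting argument (every non-gadget blue edge has length at least $2$, so a budget accounted to the last unit pays for exactly the unit-distance chains). The clause--variable interaction is then a pure \emph{planarity} conflict: each clause has budget for exactly one length-$2$ vertical connector down to the variable line, and that connector crosses the red horizontal edge at $y=1$ (respectively $y=-1$) unless the corresponding variable chose the other side. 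This reduces the entire rigidity analysis to (i) an exact additive accounting of $L$ over unit- and length-$2$ edges and (ii) one crossing argument, rather than a global argument that no rearrangement across gadgets is profitable. If you want to complete your proof, you should either carry out this kind of exact budget accounting for a concrete point set, or substantially redesign your gadgets so that no information needs to travel through the two-colored support at all.
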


\subsection{Iterative minimum spanning trees}
\label{sec:mstiteration}
Here we focus on computing short supports without requiring planarity.
As described by Hurtado et al.~\cite{coloredspanninggraphs}, EMSTs can be used to find an approximation of the shortest support.
In particular, let $H = (V,S)$ be a hypergraph with $n$ vertices and $k$ hyperedges; by computing an EMST for each hyperedge and taking their union, we get a support that is a $k$-approximation\footnote{One can actually do slightly better, by computing spanning trees on the intersection of two hyperedges, yielding roughly a $(0.8k)$-approximation \cite{coloredspanninggraphs}.} of the shortest support. This algorithm runs in $O(k n \log n)$ time.

Suppose that we compute the EMSTs $T_1, \ldots, T_k$ in that order, for the $k$ hyperedges in $S$.
The final support is  the union of these trees: its length is not increased by using an edge in $T_i$ that is already present in some $T_j$ ($j < i$).
Hence, we can consider any pair of vertices that is adjacent in $T_1 \cup \ldots \cup T_{i-1}$ to have distance zero, when computing $T_i$.
This heuristically reduces the length of the resulting support (though the approximation ratio remains the same).
However, the order in which hyperedges are considered now matters for the result. To alleviate this issue, we iteratively recompute the minimum spanning trees.

\subsubsection{Algorithm} 
We define a \emph{computation sequence} $\sigma$ of a hypergraph $H = (V,S)$ as a sequence of hyperedges that contains each hyperedge in $S$ at least once.
Each item $s$ in the sequence $\sigma$ represents the computation of the (not-quite Euclidean) MST on the vertices of $s$, such that distances between pairs of vertices that are part of the current support have weight $0$ and weight equal to their Euclidean distance otherwise.
We use $T_s$ to denote the current MST for hyperedge $s \in S$; the support $G$ is always the union over all $T_s$.
As we compute a spanning tree for each hyperedge, $G$ is a support for $H$ when the algorithm terminates.

\subsubsection{Efficiency}
Implementing $G$ with adjacency lists, we use $O(nk)$ storage as each of the $k$ trees has $O(n)$ edges.
To compute $T_s$, we use Lemma~\ref{lem:iterationtrees} below to conclude that 
there are $O(nk)$ candidate edges, ensuring that Prim's MST algorithm runs in $O(nk + n \log n)$ time.
To see that we can determine the weight without overhead, consider all vertices to be indexed with numbers from $1$ to $n$.
When adding a vertex $u$ to the current tree in Prim's algorithm, we first process the neighbors of $u$ in $G$ (having a weight $0$) and mark that these have been processed in an array using the above mentioned vertex index.
Only then do we process all other vertices (having weight equal to the Euclidean distance) that are not marked and are not in the current tree.
The total algorithm thus takes $O(|\sigma| (nk + n \log n))$ time and $\Theta(nk)$ space.

\begin{restatable}{lemma}{lemiterationtrees}
\label{lem:iterationtrees}
Let $P$ be a point set and $F \subseteq P \times P$.
Consider the MST $T$ on $P$, based on edge weights $0$ for edges in $F$ and the Euclidean distance otherwise.
Then $T$ is a subset of $F$ and the Euclidean MST on $P$.
\end{restatable}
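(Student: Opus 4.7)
The plan is to prove $T \subseteq F \cup \mathrm{EMST}(P)$ by an edge-exchange argument using the cycle property of MSTs. I would fix an arbitrary edge $e = (u,v) \in T$; if $e \in F$ the conclusion is immediate, so the only nontrivial case is $e \notin F$, in which the modified weight of $e$ equals its Euclidean length $w(e) > 0$. The remaining task is then to show $e \in \mathrm{EMST}(P)$.

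To that end, I would assume for contradiction that $e \notin \mathrm{EMST}(P)$. Adding $e$ to $\mathrm{EMST}(P)$ creates a unique cycle $C$, and by the cycle property of the Euclidean MST, $e$ is a heaviest edge of $C$ under Euclidean weights. Next I would remove $e$ from $T$, splitting it into components $A \ni u$ and $B \ni v$. The $u$--$v$ path in $\mathrm{EMST}(P)$, which is exactly $C \setminus \{e\}$, must cross the cut $(A,B)$, so there exists an edge $e' \in C \setminus \{e\}$ with one endpoint in $A$ and the other in $B$. Since $e' \in \mathrm{EMST}(P)$, its Euclidean weight satisfies $w(e') \le w(e)$; under the modified weighting, $e'$ has weight either $0$ (if $e' \in F$) or $w(e')$, and in both cases this is at most the modified weight of $e$. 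Swapping $e$ for $e'$ in $T$ yields a spanning tree whose total modified weight is no greater, and which is strictly smaller once the inequality is strict, contradicting that $T$ is an MST under the modified weights. Hence $e \in \mathrm{EMST}(P)$, giving $T \subseteq F \cup \mathrm{EMST}(P)$.

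The main obstacle is the handling of ties. Under the standard general-position assumption that all pairwise Euclidean distances among points of $P$ are distinct, the cycle property gives $w(e') < w(e)$ strictly and the exchange strictly decreases the total weight, so the contradiction is clean. In the degenerate case, one either breaks ties consistently or interprets the lemma as asserting the existence of an MST contained in $F \cup \mathrm{EMST}(P)$; the same exchange argument, applied to an MST chosen to maximise $|T \cap (F \cup \mathrm{EMST}(P))|$, then yields the statement.
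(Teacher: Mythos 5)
Your proposal is correct and follows essentially the same edge-exchange argument as the paper: take an edge $e \in T$ outside $F$, use the cut induced by removing $e$ from $T$ to find an edge $e'$ of the Euclidean MST crossing that cut with smaller Euclidean (hence modified) weight, and swap to contradict the minimality of $T$. Your explicit treatment of ties matches the assumption the paper relegates to a footnote (unique distances or deterministic tie-breaking), so no substantive difference remains.
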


\subsubsection{Properties ($k = 2$)}
The main question that arises is how long a computation sequence $\sigma$ must be such that that the result \emph{stabilizes}, that is, any sequence that extends $\sigma$ gives a support that has the same total length.
We use $G_{\sigma}$ to denote the support resulting from computation sequence $\sigma$. 
Below, we sketch an argument that for $k=2$, we need to only recompute one hyperedge: sequence $\sigma = \langle r,b,r \rangle$ or $\sigma = \langle b,r,b \rangle$ is sufficient to obtain a stable result.
We can compute both sequences and use the result with smallest total edge length. 

\begin{restatable}{lemma}{lemthreeisenough}
\label{lem:three-is-enough}
Let $H = (V, \{r,b\})$ be a hypergraph. All computation sequences $\sigma'$ with $|\sigma'| \geq 4$ have a shorter computation sequence~$\sigma$ with $|\sigma| = 3$ with $G_\sigma = G_{\sigma'}$.
\end{restatable}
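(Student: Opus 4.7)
The plan is to reduce any $\sigma'$ with $|\sigma'|\geq 4$ to a length-$3$ sequence in two passes: eliminate consecutive duplicates, then truncate the remaining alternating tail. For the first pass, if two consecutive entries of $\sigma'$ are both equal to some hyperedge $s$, the two recomputations of $T_s$ solve the identical MST problem, because the other tree (and hence every edge weight) is unchanged between them. With a fixed tie-breaking rule the second computation reproduces the first, so it may be dropped without altering~$G$. Repeatedly applying this collapse yields an alternating sequence $\sigma''$ that still contains both hyperedges. If $|\sigma''| = 2$, padding with any duplicated entry (harmless by the same argument) produces the desired length-$3$ sequence $\sigma$ with $G_\sigma = G_{\sigma'}$.

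Suppose next that $\sigma''$ is alternating with $|\sigma''| \geq 4$; by symmetry between $r$ and $b$ we may assume it begins with $r$. Let $E_r$ denote the Euclidean MST on the $r$-vertices and $T_r^{(1)} = E_r$, $T_b^{(2)}$, $T_r^{(3)}$, $T_b^{(4)}$ the trees after steps~1--4. It suffices to prove $T_b^{(4)} = T_b^{(2)}$: step~4 is then a no-op, and iterating the same argument collapses the entire tail back to length~$3$. Lemma~\ref{lem:iterationtrees} already gives $T_r^{(3)} \subseteq E_r \cup T_b^{(2)}$. The technical lever I would then establish is that every edge of $T_b^{(2)}$ whose endpoints both lie in $r$ belongs to $T_r^{(3)}$: under the step-$3$ weights these edges are free (weight~$0$) and form a forest on the $r$-vertices, so Kruskal's algorithm picks all of them before any positive-weight edge. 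In particular, every edge of $T_b^{(2)}$ that was free in step~$2$ (i.e.\ in $E_r$) remains free in step~$4$ (i.e.\ in $T_r^{(3)}$).

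Using this lever, I verify that $T_b^{(2)}$ still satisfies the MST cycle property under the step-$4$ free set $T_r^{(3)}$. Fix any $e \notin T_b^{(2)}$ and let $C_e$ be the fundamental cycle of $e$ in $T_b^{(2)} \cup \{e\}$. If $e$ became free only in step~$4$, then $e \in T_r^{(3)} \setminus E_r \subseteq T_b^{(2)}$, contradicting $e \notin T_b^{(2)}$. In every remaining case, combining the MST condition already known to hold for $T_b^{(2)}$ under the step-$2$ free set with the lever above shows that each edge of $C_e \setminus \{e\}$ has weight no more than that of $e$ under the step-$4$ weights. Hence $T_b^{(2)}$ is an MST in step~$4$, and deterministic tie-breaking forces $T_b^{(4)} = T_b^{(2)}$. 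The hard part is precisely this cycle-property case analysis: it is where the change of free set most directly threatens optimality, and Lemma~\ref{lem:iterationtrees} must be invoked through the $r$-endpoints lever to pin the relevant cycle weights in place; the collapsing step and the induction on the length of the alternating tail are routine once this stabilization is in hand.
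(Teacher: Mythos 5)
Your overall strategy matches the paper's: collapse consecutive duplicates (the paper's Observation~\ref{obs:alternate}), then show that the fourth computation reproduces the second, and iterate. The technical core differs, though. The paper classifies edges as purple/red/blue (purple meaning both endpoints lie in $r\cap b$), shows that the third computation adds no new purple edge, and then proves $G_3=G_4$ by two cut-property inclusions on the supports. You instead work at the level of the individual trees and aim for $T_b^{(4)}=T_b^{(2)}$ via the cycle property. Your genuinely correct key ingredient is the use of Lemma~\ref{lem:iterationtrees} to get $T_r^{(3)}\setminus E_r\subseteq T_b^{(2)}$, which is what rules out new zero-weight edges outside $T_b^{(2)}$ appearing in step~4; this plays exactly the role of the paper's ``no new purple edge'' claim.

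The flaw is in your ``lever''. You assert that every edge of $T_b^{(2)}$ with both endpoints in $r$ belongs to $T_r^{(3)}$ because such edges are free in step~3 and form a forest, ``so Kruskal's algorithm picks all of them''. But the step-3 free set is not just those edges: it is all of $E_r$ together with the $r$--$r$ edges of $T_b^{(2)}$, and this union can contain cycles (an $r$--$r$ edge of $T_b^{(2)}$ outside $E_r$ closes a cycle with the spanning tree $E_r$). Kruskal's then keeps only a maximal forest of the weight-zero edges, chosen by tie-breaking, with no guarantee that it favors the $T_b^{(2)}$ edges; so the lever can fail. Fortunately it is not needed. Every edge of $T_b^{(2)}$ is free in step~4 simply because $T_b^{(2)}$ is still part of the support $G_3$, and the only other candidates for being free in step~4 are the $b$--$b$ edges of $T_r^{(3)}$, which by Lemma~\ref{lem:iterationtrees} lie in $E_r\cup T_b^{(2)}$; since the step-2 free set (the $b$--$b$ edges of $E_r$) is a forest and hence wholly contained in $T_b^{(2)}$, the step-4 free set is exactly $T_b^{(2)}$, and $T_b^{(4)}=T_b^{(2)}$ follows at once, without the cycle-property case analysis you defer. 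With that repair your argument is sound, relying (as the paper does) on deterministic tie-breaking; note also that passing from $|\sigma'|=4$ to all longer sequences needs one further stabilization step for the $r$-tree, which you assert but do not carry out---an omission the paper's own writeup shares.
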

\begin{proof}[sketch]
We show that the third computation does not add a new edge with both vertices in $r \cap b$. Hence, the second and fourth computation receive the same input and thus yield the same result.
\qed
\end{proof}

\subsection{Local search}
\label{sec:localsearch}
The algorithm described in Section~\ref{sec:mstiteration} appears to perform well in practice, as shown in Section~\ref{sec:experiments}. However, one may wonder whether other commonly employed heuristic approaches outperform it in the experiments.
We therefore implement a local-search algorithm, specifically, a hill-climbing heuristic.

\subsubsection{Algorithm}
This approach assumes that in the given hypergraph $H = (V,S)$, at least one vertex $v \in V$ occurs in all hyperedges $s \in S$ such that Observation~\ref{obs:star} applies; let $\VA = \bigcap_{s \in S} s \ne \emptyset$. 
We need to initialize our hill climbing approach with a valid (plane), easy to find albeit possibly suboptimal solution.
Following Observation~\ref{obs:star}, we obtain this by first calculating an EMST of all vertices in $\VA$, and subsequently connecting all vertices $v \not\in \VA$ to the nearest $v' \in \VA$.

Afterwards, we iteratively execute rounds until no further improvement is gained. Each round consists of checking for each edge in the support if it can be removed, and if the hyperedges using it can be reconnected by (one or more) other edges that have a shorter total length than the removed edge without causing intersections. This check is nontrivial and done in a brute-force manner, improved by caching and pruning. At the end of each round, the edge replacement that reduces the total edge length most is actually executed. More rounds are evaluated until no single edge replacement reduces the total edge length.

As the initial state is a plane support tree, we can also readily enforce acyclicity, or relax the constraints to allow intersections.

\subsection{Integer linear program}
\label{sec:ilp}
Theorem~\ref{thm:nphard} implies that several variants of computing the shortest plane support are \NP-hard.
Here we briefly sketch how to obtain an \emph{integer linear programs} (ILP) for a hypergraph $H = (V,S)$, allowing us to leverage effective ILP solvers.

We introduce variables $e_{u,v} \in \{0,1\}$, indicating whether edge $uv$ is selected for the support.
This allows us to represent a graph with fixed vertices.
Because the vertex locations are fixed, we can precompute edge lengths $d_{u,v}$ as well as which pairs of edges intersect. This gives the following basic program
\begin{eqnarray*}
\text{minimize}   & \sum_{u, v \in V} d_{u,v} \cdot e_{u,v} \\
\text{subject to} & e_{u,v} + e_{w,x} \leq 1
                  & \text{for all } u,v,w,x \in V  \text{ if edges } uv \text{ and } wx \text{ intersect.}
\end{eqnarray*}

What remains is to ensure that the graph is also a support: we need additional constraints that imply that each hyperedge in $S$ induces a connected subgraph.
To this end, we construct a flow tree for each hyperedge $s$. We pick an arbitrary sink for the hyperedge, $\sigma_s \in s$, that may receive flow, and let the remaining vertices in $s$ generate one unit of flow. To formalize this, we introduce variables $f_{s,u,v} \in \{0, 1, \ldots, |s|-1 \}$ for each $s \in S$ and $u,v \in s$ with $u \neq v$. We now need the following constraints:
(a) the incoming flow at $\sigma_s$ is exactly $|s|-1$;
(b) the outgoing flow at $\sigma_s$ is zero;
(c) except for $\sigma_s$, each vertex in $s$ sends out one unit of flow more than it receives;
(d) flow can be sent only over selected edges.
\begin{eqnarray*}
\text{(a)} & \sum_{u \in s \setminus \{\sigma_s\}} f_{s,u,\sigma_s} = |s| - 1
           & \text{for all } s \in S \\
\text{(b)} & f_{s,\sigma_s,v} = 0
           & \text{for all } s \in S, v \in s \setminus \{ \sigma_s \} \\
\text{(c)} & \sum_{v \in s \setminus \{u\}} (f_{s,u,v} - f_{s,v,u}) = 1
           & \text{for all } s \in S, u \in s\setminus\{ \sigma_s \} \\
\text{(d)} & f_{s,u,v} \leq e_{u,v} \cdot (|s| - 1)
           & \text{for all } s \in S, u,v \in s \text{ with } u \neq v
\end{eqnarray*}

\subsubsection{Variants}
The ILP results in the shortest plane support for $H$. It can easily be modified to give a shortest (plane or unconstrained) support tree as well as to penalize or admit a limited number of intersections. The latter requires additional variables to indicate whether both edges of a crossing pair are used.

\section{Experiments}
\label{sec:experiments}
As discussed above, there are various ways of defining and computing good supports.
In this section we discuss several computational experiments that were performed to gain insight into the trade-offs between the different methods and properties.
In particular, we use two different setups.
First, we exclude optimal but slow algorithms to extensively compare the heuristic algorithms.
Second, we include optimal algorithms to answer questions about the effect of requiring planarity or support trees, and to investigate how well heuristic algorithms approximate the optimal solution, albeit on smaller data sets.

\subsubsection{Algorithms}
We shall study four algorithms under various conditions in these experiments.
In particular, we use \textsc{MSTApproximation} to refer to the simple approximation algorithm of computing a minimum spanning tree for each hyperedge and then taking their union~\cite{coloredspanninggraphs}.
We refer to our heuristic improvement as \textsc{MSTIteration} (Section~\ref{sec:mstiteration}).
Finally, we use \textsc{LocalSearch} to indicate our local search algorithm (Section~\ref{sec:localsearch}) and \textsc{Opt} to denote an exact algorithm for computing optimal solutions.
The latter two allow four different conditions, by requiring a plane support, a support tree, both (i.e., a plane support tree) or neither (unrestricted). We append \textsc{P}, \textsc{T}, \textsc{PT} and \textsc{U} to denote these conditions.

\subsubsection{Data generation}
We generate a random hypergraph $H = (V,S)$ via the procedure described in Appendix~\ref{app:exp_datagen}.
Our method ensures that at least one vertex is an element of all hyperedges (necessary for \textsc{LocalSearch}, see Section~\ref{sec:localsearch}), and that each hyperedge has at least two vertices.
The procedure generates a hypergraph with $n$ vertices, $s$ hyperedges and a degree distribution $d$ according to one of the following scheme:

   \noindent\begin{tabular}{ll}
      ~~\EVEN~~ & All degrees occur equally frequently.\\
      ~~\MID & Degrees are drawn from a normal distribution with a peak on $k/2$.\\
      ~~\LOW & Degrees are drawn from a normal distribution with a peak on $1$.\\
      ~~\HIGH & Degrees are drawn from a normal distribution with a peak on $k$.
    \end{tabular}

\subsection{Experiment 1: comparison of heuristics}

Here we focus on answering the following three questions:
(1) how much does the spanning tree iteration help to reduce the length of the support, compared to computing the minimum spanning trees in isolation;
(2) which heuristic algorithm performs best in terms of support length;
(3) which heuristic algorithm performs best in terms of computation time?

\subsubsection{Setup}
For each combination of $n = 20$, $40$, $60$, $80$, $100$, $k = 2$, $3$, $4$, $5$, $6$, $7$ and $d = \EVEN$, $\MID$, $\LOW$, $\HIGH$, we generate $1000$ random hypergraphs with $n$ vertices and $k$ hyperedges according to degree distribution scheme $d$.
For each hypergraph, we perform six algorithms: \textsc{MSTApproximation} and \textsc{MSTIteration} as well as \textsc{LocalSearch U/T/P/PT}.
This experiment was run on one machine, sequentially in a single thread to also allow for comparison of runtime performance.
The machine was an HP ZBook with an Intel Core i7-6700HQ CPU, 24 GB RAM and running Windows 8.1.

\subsubsection{Results}
We first consider question (1) and compare \textsc{MSTApproximation} and \textsc{MSTIteration}. Since \textsc{MSTIteration} can only improve upon \textsc{MSTApproximation}, we express this as a ratio between 0 and 1.
In Fig.~\ref{fig:mst-ratio} we show the results for $n = 20,60,100$ (Fig.~\ref{fig:app_mst-ratio} in Appendix~\ref{app:exp_heuristics} provides the chart for all cases).
Interestingly, the median gain remains roughly equal as we increase the number of vertices, though the variance becomes lower.
Increasing the number of hyperedges gradually increases the relative gain of \textsc{MSTIteration}.
We also observe a dependency on the degree distribution. In particular, \MID and \EVEN systematically benefit more from iteration than \LOW and \HIGH.
We explain this by observing that in the extreme cases \textsc{MSTApproximation} is optimal: if all vertices have degree 1, then the optimal support is simply the union of all (disjoint) minimum spanning trees; if all vertices have degree $k$, then the optimal support is also simply the minimum spanning tree on the vertices. Difficulties arise when having many vertices that are part of multiple but not all hyperedges.
This corresponds to the \MID and \EVEN schemes.

\begin{figure}[t]
  \centering
  \makebox[\textwidth][c]{\includegraphics[scale=0.48]{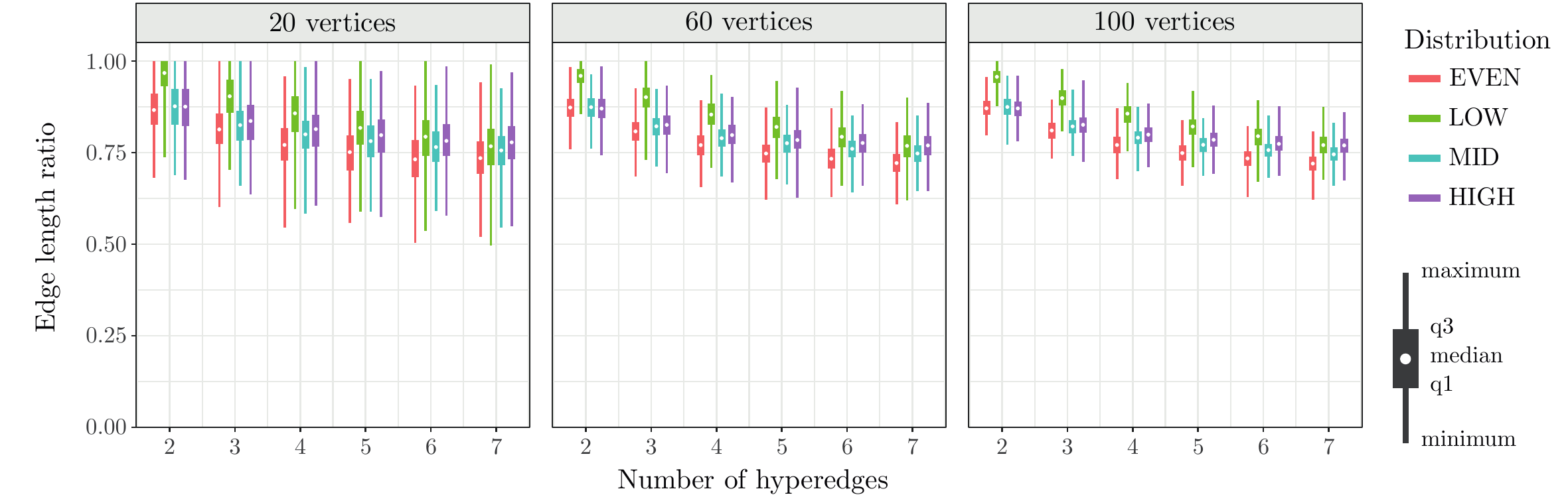}}
  \vspace{-2\baselineskip}
  \caption{Ratio of the support length computed by \textsc{MSTIteration} as a fraction of \textsc{MSTApproximation}. Lower values indicate a higher gain of the iteration method.}
  \label{fig:mst-ratio}
\end{figure}

\begin{figure}[b]
  \centering
  \makebox[\textwidth][c]{\includegraphics[scale=0.48]{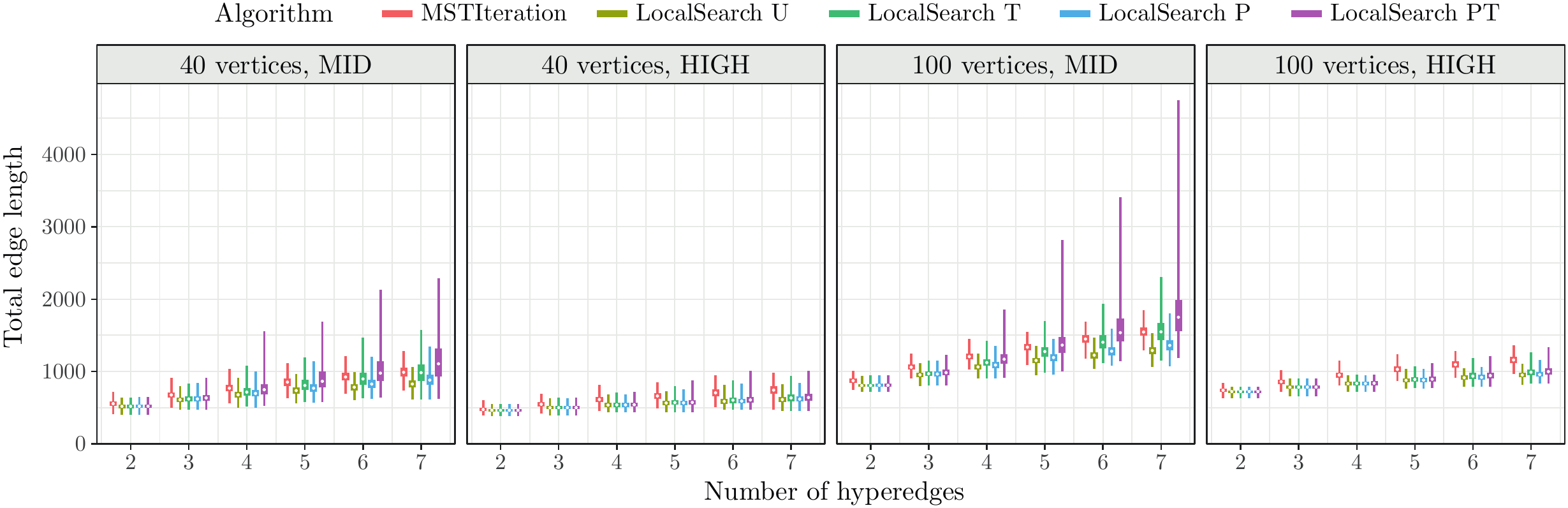}}
  \vspace{-2\baselineskip}
  \caption{Support length computed by the algorithms for varying values of $n$, $k$ and~$d$.}
  \label{fig:length-overview}
\end{figure}

Let us now turn towards question (2), and consider the resulting support length of the \textsc{LocalSearch} algorithm as well.
We omit \textsc{MSTApproximation} from these comparisons, since \textsc{MSTIteration} always performs at least as well.
In Fig.~\ref{fig:length-overview} we show the results for $n = 40$ and $100$ (Fig.~\ref{fig:app_length-overview} in Appendix~\ref{app:exp_heuristics} provides the chart for all cases).
As one may expect, the length increases gradually with more hyperedges, as the support must use more edges to ensure that each hyperedge induces a connected subgraph.
Moreover, we see that \textsc{LocalSearch U} consistently outperforms \textsc{MSTIteration}. To be exact, this is the case in $98.5\%$ of all trials; the average ratio of \textsc{LocalSearch U} to \textsc{MSTIteration} (including those trials in which \textsc{MSTIteration} performs better) is $0.877$, that is, the support length is over $12\%$ shorter on average.
The effect of degree distribution also stands out.
In \LOW and \MID, requiring planarity or a support tree has a large effect on the support length, whereas this is not the case in \EVEN and \HIGH.
To explain this, observe that the minimum spanning tree on vertices that are in many or all hyperedges is planar and likely a part of the computed solution; in the \EVEN and \HIGH cases, there are comparatively many such vertices which can then serve as places to connect the other vertices in the support. In the \LOW and \MID cases, there are only few such vertices and thus the shortest connections that can be used to connect these to such a ``backbone'' structure are likely to intersect other connections.
Though the number of vertices has little effect on \textsc{MSTIteration} and \textsc{LocalSearch U}, this does exacerbate the above problem: more vertices leads to a larger increase in support length when we enforce planarity or a support tree.

\begin{figure}[t]
  \centering
  \makebox[\textwidth][c]{\includegraphics[scale=0.48]{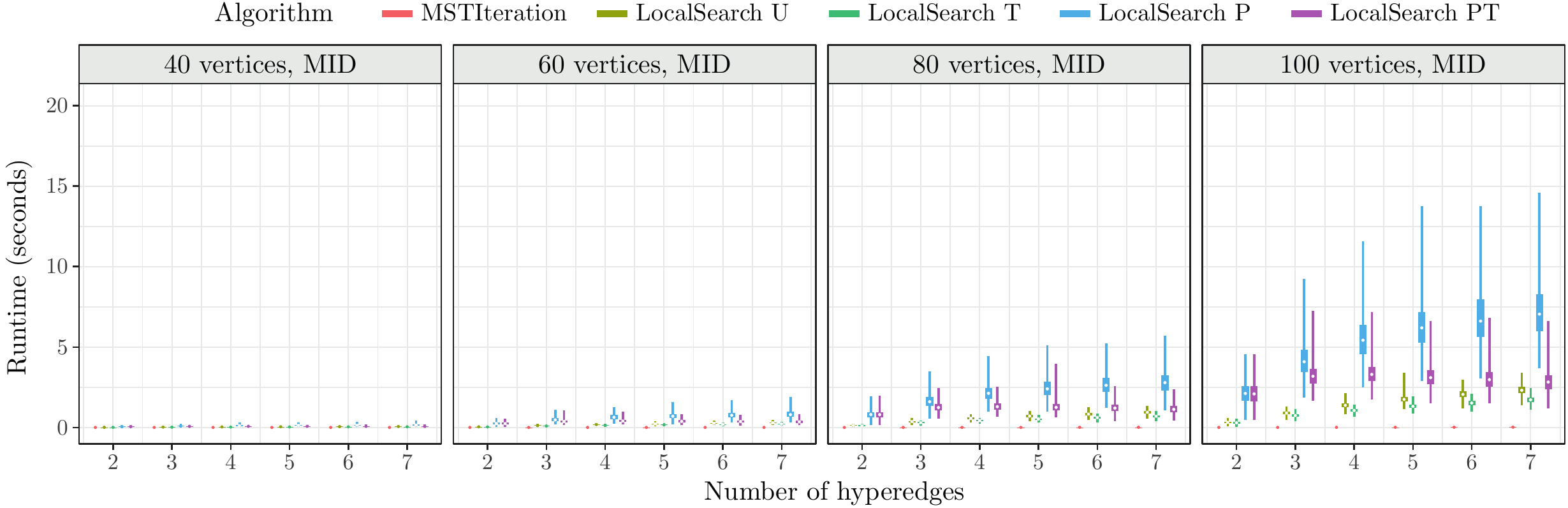}}
  \vspace{-2\baselineskip}
  \caption{Computation time of the various algorithms for varying values of $n$ and $k$.}
  \label{fig:time-overview}
\end{figure}

Finally, we briefly consider question (3) and compare the computation times of the various algorithms (see Fig.~\ref{fig:time-overview}, or Fig.~\ref{fig:app_time-overview} in Appendix~\ref{app:exp_heuristics}).
We see that the number of hyperedges impacts the computation only slightly, whereas the number of vertices has a much stronger effect.
\textsc{MSTIteration} clearly outperforms the \textsc{LocalSearch} variants, running on average $95.11\%$ faster than \textsc{LocalSearch U} over all trials ($98.73\%$ faster on trials with $n = 100$).
Another clear pattern is that requiring planarity with \textsc{LocalSearch} increases the running time significantly ($272.64\%$ slower over all trials, $354.06\%$ on trials with $n = 100$);
the number of steps to arrive at a local minimum is not sufficiently reduced to compensate for the time spent on checking intersections.

\vspace{-0.2\baselineskip}
\subsection{Experiment 2: comparison of optimality}

Here we focus on answering two questions:
(1) how is the support length affected by additionally requiring that the support is a tree and/or is planar;
(2) how well do the heuristic algorithms approximate the optimal solution?

\vspace{-0.1\baselineskip}
\subsubsection{Setup}
For each combination of $n = 10$, $15$, $20$, $k = 2$, $3$ and $d = \LOW$, $\MID$, we generate $1000$ random hypergraphs with $n$ vertices, $k$ hyperedges according to degree distribution scheme $d$.
For each hypergraph, we run the \textsc{LocalSearch U/T/P/PT} and compute an optimal solution \textsc{Opt U/T/P/PT}\footnote{For $n = 10,15$, this is a simple branch and bound algorithm; for $n = 20$ we use the ILP solution, solved with IBM ILOG CPLEX 12.6.3.}.
To obtain a large enough number of trials, these experiments were run on different machines simultaneously and in concurrent threads.
As such, we refrain from analyzing algorithm speed in this experiment.

\vspace{-0.1\baselineskip}
\subsubsection{Failed trials}
In about $3.4\%$ of the CPLEX runs for $n=20$, the computation would run out of memory and therefore not finish successfully.
We ran additional trials to compensate, eventually obtaining 1000 successful trials.
This likely biases the results for $n=20$ towards including only the ``easier'' situations.
Appendix~\ref{app:exp_opt} provides more details including statistics on which cases failed and indicators of the ``difficulty'' of these cases.

\subsubsection{Results}
Let us first compare the optimal solutions according to the four different restrictions.
In Fig.~\ref{fig:opt-overview} we show the results.
For two hyperedges, we see that there is little to no effect of requiring support trees, but a small worst-case effect for requiring plane supports for the \LOW case---the median increases only slightly.
For three hyperedges, we see that the effects become slightly larger. Most noticeable is that enforcing support trees has now a slight effect, even for only a few vertices.
In terms of plane supports, we see a similar pattern as before, that is, that of an increase particularly in the \LOW case, but also some in the \MID case.
Note that the effects for $n=20$ are potentially underestimated.

\begin{figure}[b]
  \centering
  \makebox[\textwidth][c]{\includegraphics[scale=0.48]{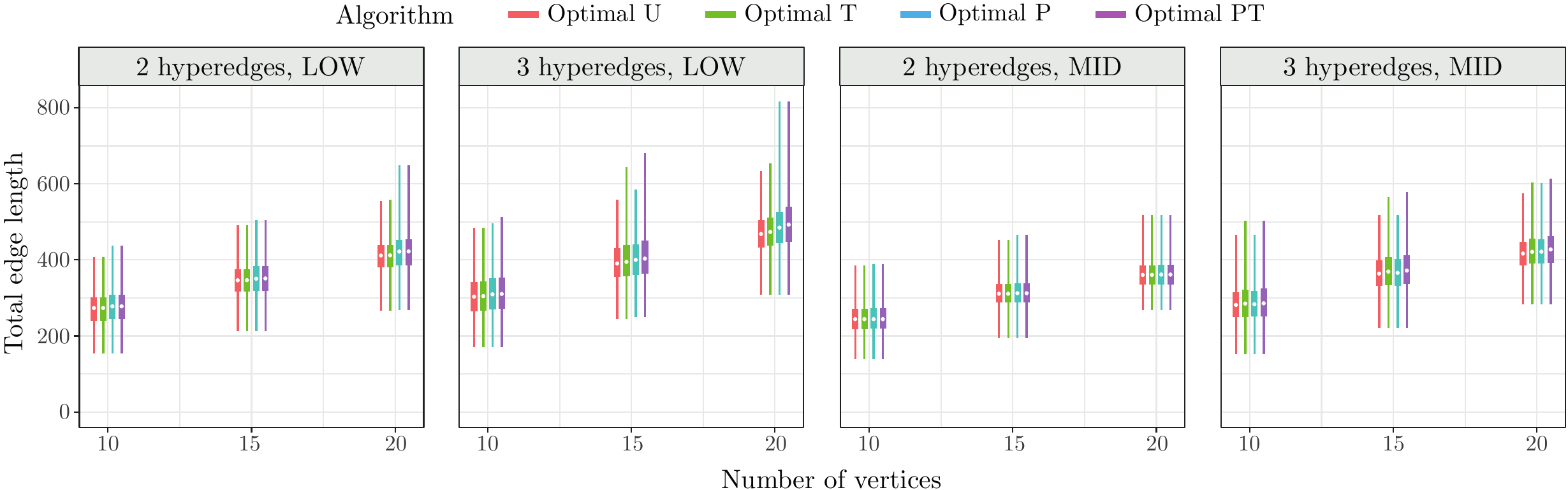}}
  \vspace{-2\baselineskip}
  \caption{Support length achieved by \textsc{Opt} in the four conditions \textsc{U/T/P/PT}.}
  \label{fig:opt-overview}
\end{figure}

\begin{figure}[b]
  \centering
  \makebox[\textwidth][c]{\includegraphics[scale=0.48]{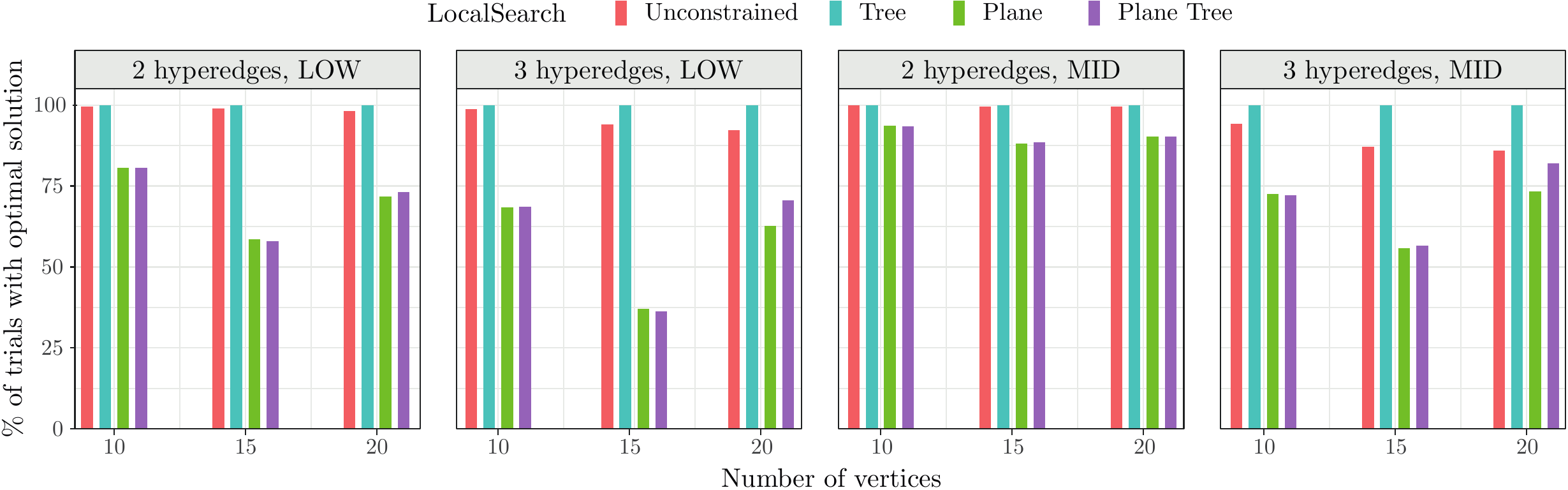}}
  \vspace{-2\baselineskip}
  \caption{Percentage of runs of \textsc{LocalSearch} that achieve the optimal solution. Note that \textsc{LocalSearch T} always achieves optimal results.}
  \label{fig:ls-opt}
\end{figure}

Let us now turn towards how well \textsc{LocalSearch} performs with respect to the optimal solution.
Our results indicate that in a majority of the cases, our heuristic actually achieves optimal results (see Fig.~\ref{fig:ls-opt}).
For $n=10$,$15$ we see a clear decrease of this percentage for plane supports and trees; we attribute the apparent increase at $n=20$ to the failed trials.
To further see how well \textsc{LocalSearch} performs if it fails to achieve optimal results, we look at the ratio between the support length it achieves and the optimal support length. 
In all cases, we observe a ratio of less than $1.61$.
The 90-, 95-, and 99-percentile of this ratio was worst for \textsc{LocalSearch PT}, being $1.05$, $1.09$, and $1.19$, respectively.
Again, we have to keep in mind that the data for $n=20$ likely exclude some more difficult cases and thus the trend in the increasing ratio might extend further for a larger number of vertices.

\section{Conclusion}
\label{sec:conclusion}
Motivated by the \NP-hardness of computing shortest plane supports, we introduced and evaluated two heuristic algorithms for the problem.
Our experiments showed that the heuristic \textsc{LocalSearch} often achieves the optimal solution, and otherwise computes a support that is less than $20\%$ longer than the optimal solution in $99\%$ of the cases.
Moreover, our experiments showed that \textsc{LocalSearch} performs better than \textsc{MSTIteration}, which in turn is a $k$-approximation for $k$ hyperedges.
We can also guarantee that \textsc{LocalSearch} (without restrictions) is a $k$-approximation by initializing it using either \textsc{MSTApproximation} or \textsc{MSTIteration}, though it is not clear whether this change will generally improve the result of \textsc{LocalSearch}.
There is a trade-off between speed and support length, where \textsc{MSTIteration} is better for the former and \textsc{LocalSearch} for the latter.
We also observed that the increase in support length caused by additional requirements, depends both on the number of sets and the number of set memberships per element, but this behavior seems predictable and not to depend on the number of elements.

\subsubsection{Future work}
From the theoretical side, several questions remain open.
For example, can we efficiently decide whether a plane support tree exists? We currently know how to answer this only for two hyperedges (using Observation~\ref{obs:star} and \cite{redbluetrees}).
Furthermore, how many iterations do we need for \textsc{MSTIteration} with more than two hyperedges, to guarantee that the computation stabilizes?

Our experiments indicate that our local search algorithm does not always perform optimally, especially when requiring plane supports.
It is, however, based on simple hill climbing. Can we employ better search techniques such as simulated annealing to efficiently find better solutions?

Finally, we chose to generate random hypergraphs for our experiments, as to not depend on particular properties of (geospatial) configurations that may be inherent to some real-world data sets.
While this reduces the explanatory power with respect to real-world data sets, it provides us with more insight into the structural problem, unbiased by unknown or hidden structures of real-world data.
We leave it to future work to further dive into real-world data sets, to see if similar trends and patterns emerge or more difficult structures arise and to evaluate the impact of the different heuristics on readability.

\subsubsection{Acknowledgments}
This work started at Dagstuhl seminar 17332 ``Scalable Set Visualizations''. The authors would like to thank Nathalie Henry Riche for providing the data for Fig.~\ref{fig:realdata}. TC was supported by the Netherlands Organisation for Scientific Research (NWO, 314.99.117). MvG received funding from the European Union's Seventh Framework Programme (FP7/2007-2013) under ERC grant agreement n$^{\text{o}}$~319209 (project NEXUS 1492) and the German Research Foundation (DFG) within project B02 of SFB/Transregio~161. WM was partially supported by the Netherlands eScience Centre (NLeSC, 027.015.G02).

\clearpage

\clearpage
\appendix

\section{Appendix: Omitted proofs}
\label{app:proofs}
\lemapprox*
\begin{proof}
The hypergraph family is illustrated in Fig.~\ref{fig:approx} (on page~\pageref{fig:approx}).

The set $\VA = \{u,v,w\}$ consists of three vertices whose EMST $T$ has length $\ell + 1$ and is indicated by the black edges in Fig.~\ref{fig:approx}(a).
The remaining vertices in $V \setminus \VA$ are indicated in red and blue (indicating membership of $r$ and $b$) and placed inside a disk of radius $\varepsilon$ just left of the midpoint of edge $uv$.
The vertices alternate in colors from left to right and form two mirrored convex chains.

Since edge $uv$ of $T$ splits the vertices in $V \setminus \VA$ and by their placement on convex chains, the shortest extension of $T$ into a plane support tree is to connect every vertex to $u$ (Fig.~\ref{fig:approx}(a)).
This yields a total length of the support tree of $\Theta(n) \cdot \ell$.
If, however, $\VA$ is connected by a slightly longer tree, the remaining vertices in $V \setminus \VA$ can be joined by two comb-shaped structures as shown in Fig.~\ref{fig:approx}(b).
The resulting plane support tree has length of $\Theta(1) \cdot \ell$.
\qed
\end{proof}

\thmnphard*
\begin{proof}
We first show the reduction for the case that $r \subseteq b$.
We use a reduction from planar monotone 3-SAT \cite{l-pftu-82}. 
Here, we are given a 3-CNF formula $\phi$ with $n$ variables $v_1, \ldots, v_n$ and $m$ clauses $c_1, \ldots, c_m$ such that every clause either has three positive literals or three negative literals. Moreover, we are given an embedding of~$\phi$ as a graph, with rectangular vertices for variables on a horizontal line, and clauses as rectangles above or below the line (depending on whether the clause is positive or negative). Vertical edges connect clauses to the variables of their literals.

We must construct a hypergraph $H = (V,\{r,b\})$ such that $r \subseteq b$. In the remainder, we assign vertices to either $r$ (red) or $b$ (blue), understanding that any red vertex is also in $b$.

First, we place $3(n+1)$ red vertices using coordinates $(3 i \cdot (m+1), y)$ for integers $i \in [0,n]$ and integers $y \in [-1,1]$. Furthermore, we place $n \cdot (3m + 2)$ blue vertices using coordinates $(3i(m+1) + j, 0)$ for integers $i \in [0,n-1]$ and $j \in [1,3m+2]$.

We now place additional blue vertices for each clause $c_a$. We assume that this clause has positive literals for variable $v_i$, $v_j$, and $v_k$; the construction for clauses with negative literals is symmetric, using negative $y$-coordinates instead. First, we place $3a+1$ blue vertices from $(3(i-1)(m+1) + 3p,2)$ to $(3(i-1)(m+1)+3p,2+3a)$ at unit distance, to represent the incidence from $c_a$ to variable $v_i$, using the given embedding to determine that $c_a$ is the $p$th clause incident from above to $v_i$.
Analogously, we place the blue vertices for $v_j$ and $v_k$.
Now, we place further blue vertices at unit distance with $y$-coordinate $2+3a$ from the leftmost to the rightmost top vertex we just placed.
The result is given in Fig.~\ref{fig:nphard}.

\begin{figure}[tb]
  \centering
  \includegraphics[page = 2]{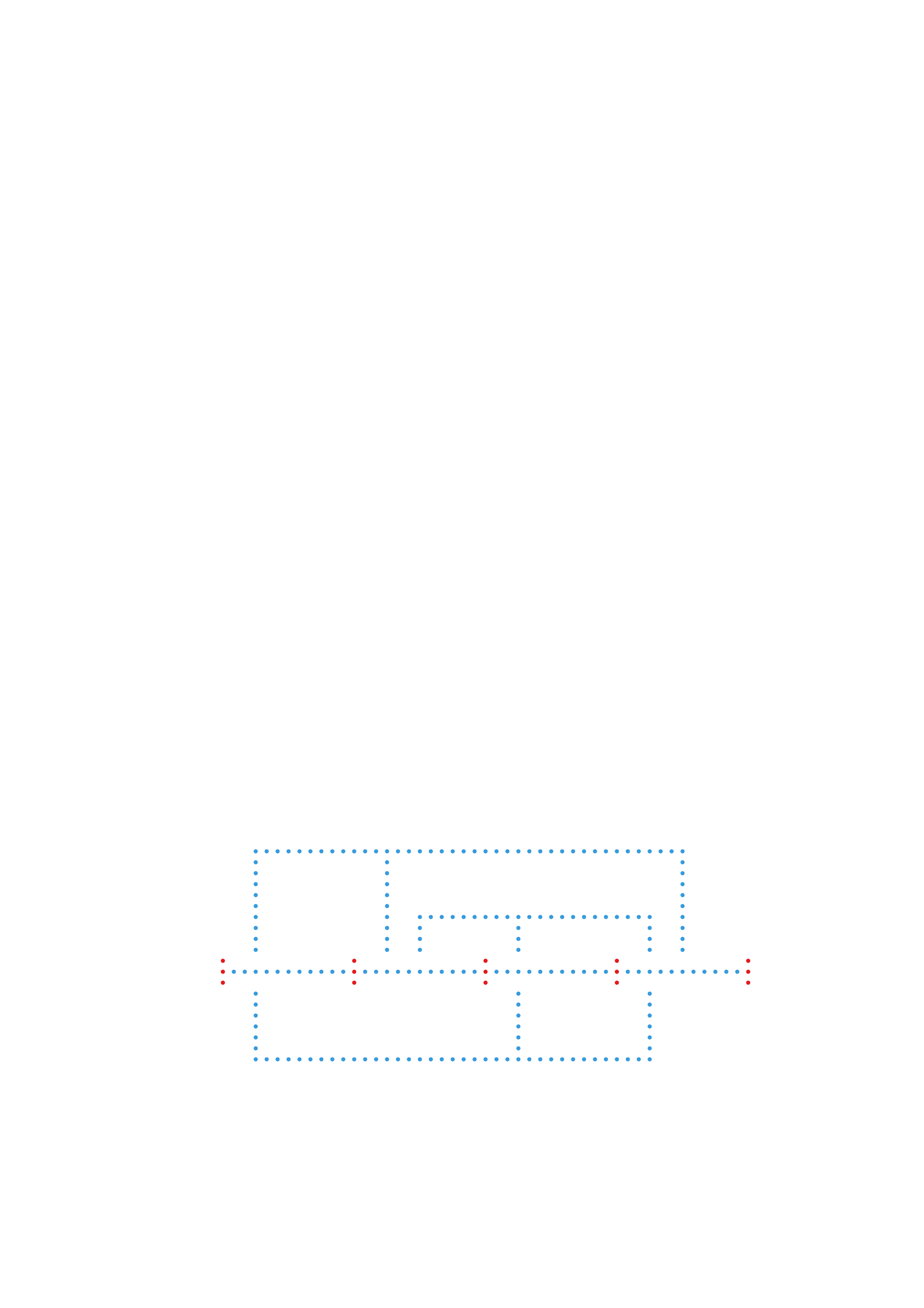}
  \caption{Construction for $\phi = (v_2 \vee v_3 \vee v_4) \wedge (\overline{v_1} \vee \overline{v_3} \vee \overline{v_4}) \wedge (v_1 \vee v_2 \vee v_4)$. Vertices in $r$ and $b$ are red, vertices in $b$ are blue. A plane support tree with length at most $L$ is given in black lines. (a)~Representation of variable $v_1$; the solution sets $v_1$ to true. (b) Representation of the first clause.}
  \label{fig:nphard}
\end{figure}

One clause requires at most $3(3m+1)$ vertices for the variable incidence and less than $3n\cdot(m+1)$ for the horizontal line connecting these. We can now readily measure the length of the minimum spanning tree on the blue vertices of one clause. We use $L_a$ to denote this length; note that $L_a$ is an integer at most $3(3m+1) + 3n\cdot(m+1)$.

The value of $L$ that we select is $2(n+1) + 3n \cdot (m+1) + n(3m+2)  + 2m + \sum_{a \in [1,m]} L_a$.

This finalizes the construction. It is polynomial since we placed $3(n+1)$ red vertices and $n \cdot (3m-2)$ blue vertices for the variables and at most $m \cdot (3(3m+1) + 3n\cdot(m+1))$ for the clauses: this is $O(nm^2)$ vertices.
Moreover, we claim that our constructed hypergraph admits a plane support tree of length at most $L$, if and only if $\phi$ is satisfiable.

Assume we have a plane support tree of length at most $L$. First, we observe that all points in $r$ must be connected: the minimal way of doing so connects the three vertices with the same $x$-coordinate and uses one horizontal line to connect one triplet to the next.
This has exactly length $2(n+1) + 3n \cdot (m+1)$, corresponding to the first two terms defining $L$.
The minimal way of connecting the lines inside the variables to the red tree takes length $n(3m+2)$ in total: this is the third term defining $L$. Finally, to connect the clause vertices, we need length at least $L_a$ per clause, the last term of $L$. We note that any solution must use these constructions on the blue vertices, since all vertices are at unit distance; other blue vertices are at distance at least $2$.
However, the support tree is connected: thus it must still have connections from each gadget to either a red vertex or a blue vertex of a variable.
The budget we have for this is $2m$ in total.
Since each clause needs a connection of length at least $2$, all clauses use exactly length $2$.
The only vertices within distance $2$ of a clause are the three blue vertices of the variables with $y$-coordinate zero (one of each literal of the clause).
Thus, each clause must have exactly one length-$2$ edge to one of these variable vertices. Since the support tree is plane, this cannot cross the horizontal links used to connect the red vertices. We can now readily obtain a satisfying assignment for $\phi$, by looking at which of the two horizontal lines is used to connect the red vertices: if the one at the top is used, that variable is set to false; it is set to true otherwise.

To prove the converse, assume that we have a satisfying assignment. Using the same reasoning as above, we can construct the plane support tree by picking the connecting horizontal lines for the red vertices according to the satisfying assignment: this readily leads us to conclude that we can connect each clause using a length-$2$ connection that does not intersect the horizontal lines for the red vertices.

Finally, let us consider the case that $r$ and $b$ are disjoint. The reduction can easily be amended to work for this case: the red vertices are only in $r$ rather than $r$ and $b$.
This then needs slightly more spacing such that we can add a few extra blue vertices that can be used to connect all the blue vertices of the variables into a single component using only length-$1$ edges.
\qed
\end{proof}

\lemiterationtrees*
\begin{proof}
Let $T'$ denote the Euclidean MST on $P$.
Assume that MST $T$ has some edge $e$ that is neither in $F$ nor in~$T'$.
Since $T$ is a tree, removing $e$ from it partitions the tree into two connected components.
By definition, $T'$ contains an edge $e'$ that connects the two components and by assumption~$e' \neq e$.
Since $T'$ is the Euclidean MST\footnote{This assumes either unique distances between all pairs of vertices, or a deterministic way of choosing which edge goes in the MST when multiple have the same minimum weight. The latter can easily be implemented in practice and is as such a reasonable assumption.}, we know that $\| e' \| < \| e \|$, where $\|\cdot\|$ denotes the Euclidean length.
Since $e$ is not in $F$, the weight it contributes to $T$ is $\|e\|$ and thus we can find a shorter spanning tree $T^*$, by replacing $e$ with $e'$ in $T$.
This contradicts that $T$ is the MST, thus proving the lemma.
\qed
\end{proof}

\begin{observation}\label{obs:alternate}
A computation sequence featuring two consecutive occurrences of the same hyperedge achieves the same result as the computation sequence in which these consecutive occurrences have been replaced by a single occurrence.
Hence, any computation sequence (that is not equivalent to some shorter sequence) consists of alternating $r$'s and $b$'s.
\end{observation}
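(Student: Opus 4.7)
The observation makes two assertions: a \emph{collapse property} (two consecutive occurrences of the same hyperedge are equivalent to a single one) and an \emph{alternation} consequence for $k=2$. My plan is to establish the collapse property first and then derive the alternation by iterative shortening.

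For the collapse it suffices to show that immediately repeating an $s$-computation leaves the state unchanged. Let $T_s^{\text{new}}$ be the tree produced by the first of the two consecutive $s$-computations and let $G'$ be the support right afterwards. By construction $T_s^{\text{new}} \subseteq G'$, so every edge of $T_s^{\text{new}}$ has weight $0$ with respect to the weight function used by the second $s$-computation; hence $T_s^{\text{new}}$ is itself an MST on $s$, of total weight $0$, under those weights. Invoking the deterministic behavior of the Prim's variant described in Section~\ref{sec:mstiteration} --- at each extension step the zero-weight neighbors of the newly added vertex are processed first in a fixed vertex order, before the Euclidean-weighted ones --- I intend to argue by induction on the steps of Prim's that the second run retraces the first: at each step, the edge that the first run selected (whether originally zero- or positive-weight) remains the edge the tie-break rule picks in the second run, because any competing zero-weight edge either closes a cycle at that step or comes later in the fixed processing order. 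Hence the second run returns $T_s^{\text{new}}$, and consequently neither $T_s$ nor any other $T_{s'}$ nor the support $G$ changes; the tails of $\sigma$ and the collapsed $\sigma'$ then execute from identical states and produce the same final support.

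Given the collapse, the alternation assertion follows by repeated shortening: any sequence with two adjacent identical entries is equivalent to the strictly shorter sequence obtained by deleting one of them, so any irreducible sequence has no adjacent duplicates, and for $k=2$ with $S=\{r,b\}$ this is precisely an alternation of $r$'s and $b$'s. The main obstacle is the inductive replay step underlying idempotence: because other trees $T_{s'}$ may contribute zero-weight edges within $s \times s$, several distinct zero-weight MSTs on $s$ may exist a priori, so uniqueness of the second run's output is a property of the specific Prim's implementation and its tie-breaking rule rather than a property of MSTs in general --- this is the place where the argument must be tracked against the implementation details from Section~\ref{sec:mstiteration} rather than be argued abstractly.
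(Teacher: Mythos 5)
The paper states this observation without any proof, so there is no official argument to measure yours against; your decomposition---prove that an immediately repeated $s$-computation is idempotent, collapse adjacent duplicates, and read off the alternation for $k=2$---is the natural one, and the collapse-to-alternation step is unproblematic. You have also put your finger on the genuine subtlety that the paper glosses over: after the first $s$-computation the new tree $T_s$ has total weight $0$ under the updated weights, so it is \emph{an} MST, but the zero-weight subgraph on $s$ may contain cycles (edges already present in the support via other trees), so minimality plus ``determinism'' in the abstract does not pin down the output of the second run---the two runs see \emph{different} weight functions, so determinism of the algorithm alone does not give idempotence.

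What you have not yet supplied is the retracing induction itself, and the one-line justification you offer for it does not work as stated: the claim that a competing zero-weight edge ``either closes a cycle at that step or comes later in the fixed processing order'' presupposes that both runs extract vertices in the same order, which is precisely what is in doubt. Edges that carried positive Euclidean weight in the first run are zero-weight in the second, so vertices can reach key $0$ much earlier than before, and an early-extracted vertex could in principle claim a neighbour through a non-tree zero-weight edge before that neighbour's first-run parent does. To close the argument, use the structure of the zero-weight graph $Z$ seen by the second run, namely $T_s$ together with the support edges inside $s\times s$: every edge of $T_s$ that was positive in the first run joins two distinct components of the first run's zero-weight graph, and since all other zero-weight edges lie inside those components, such an edge is a bridge of $Z$; hence it is forced into any spanning tree of $Z$, and each component is entered through the same vertex as before. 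Within a component, extractions of outside vertices never touch the keys or parent pointers of inside vertices, so under a tie-break that depends only on a fixed vertex order the intra-component parent assignments are reproduced verbatim. With that, the second run returns $T_s$ exactly, the support is unchanged, and your collapse and alternation conclusions follow. Note that this reliance on a consistent tie-break is not cosmetic: an adversarial tie-break could swap an intra-component edge of $T_s$ for a parallel zero-weight edge and thereby shrink the support, falsifying the literal statement.
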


\lemthreeisenough*
\begin{proof}
By Observation~\ref{obs:alternate}, consider $\sigma'$ to start either \textit{(i)} with $\langle r,b,r,b,\ldots \rangle$ or \textit{(ii)} with $\langle b,r,b,r,\ldots \rangle$. We will show that the subsequence $\sigma$ consisting of the first three hyperedges of $\sigma'$ achieves the same support as $\sigma'$.

Consider all edges $(v_i, v_j) \in V\times V$. There are four cases:
\begin{itemize}
 \item If both $v_i$ and $v_j$ are in both $r$ and $b$, let the edge be in a set $\Pu$ of purple edges.
 \item Else, if $v_i$ and $v_j$ are both in $r$, let the edge be in a set $R$ of red edges.
 \item Else, if $v_i$ and $v_j$ are both in $b$, let the edge be in a set $B$ of blue edges.
 \item Else, the edge will never be a part of a support as the vertices do not share a color.
\end{itemize}
Without loss of generality we consider case \textit{(i)}. Let the support constructed after step $i$ of $\sigma$ be called $G_i$, so that we have $G_1$, $G_2$ and $G_3$. We show that $\Pu(G_2) = \Pu(G_3)$, where $\Pu(G)$ denotes taking the subset of edges of $G$ that are in $\Pu$.
\begin{description}
 \item[$\Pu(G_3) \subseteq \Pu(G_2)$.] Let $e_p \in \Pu(G_3)$. For a contradiction, assume $e_p \not\in \Pu(G_2)$. As edges in $\Pu$ are never removed from the support once they are added -- they have weight $0$, after all --, we have $e_p \not\in \Pu(G_1)$ either. As $G_1$ is the Euclidean MST of $r$,  by the cut property of MSTs there is another edge $e \in R \cup \Pu$ shorter than $e_p$ in the cut induced by $e_p$ that must be a part of the MST instead.\footnote{This requires the same assumption of unique distances or determinism as Lemma~\ref{lem:iterationtrees}.} When constructing $G_3$, again $e$ will be chosen over $e_p$, and thus $e_p \not\in \Pu(G_3)$. \lightning
 \item[$\Pu(G_2) \subseteq \Pu(G_3)$.] Let $e_p \in \Pu(G_2)$. We already established that edges in $\Pu$ are never removed from the support once they are added, hence $e_p \in \Pu(G_3)$.
\end{description}
Next, we show that $G_4 = G_3$, i.e., $G_{\sigma'} = G_{\sigma}$.
\begin{description}
 \item[$G_3 \subseteq G_4$.] Take an edge $e \in G_3$. For a contradiction, assume $e \not\in G_4$. As edges in $\Pu$ are not removed and edges in $R$ remain untouched, $e \in B$. As $e \not\in G_4$ and the fourth step calculates $\text{MST}(b)$, the cut property tells us that some other edge $e' \in B \cup \Pu$ is shorter and in $\text{MST}(b)$ instead. But then $e'$ would have been added in $G_2$ and hence $e \not\in G_3$. 
 \lightning
 \item[$G_4 \subseteq G_3$.] Take an edge $e \in G_4$. For a contradiction, assume $e \not\in G_3$. This means $e \not\in R$, as such edges cannot be added when computing $\text{MST}(b)$. Edges in $\Pu$ are never removed, thus $e \not\in G_2$. The second step of $\sigma$ computed $\text{MST}(b)$, hence by the cut property there must be another edge $e'$, shorter than $e$, part of $\text{MST}(b)$ instead. Indeed, this implies $e \not\in \text{MST}(b)$. However, as $G_4$ is computing an MST for $b$ and we assumed $e \in G_4$, $e \in \text{MST}(b)$. \lightning \qed
\end{description}
\end{proof}

\section{Appendix: Experimental results}
\label{app:experiments}
This appendix provides additional details regarding the experiments of Section~\ref{sec:experiments}.
Upon acceptance, we intend to make a version available on ArXiv to provide these additional details.

\subsection{Data generation}
\label{app:exp_datagen}

We generate a random hypergraph $H = (V,S)$ via to the procedure below. We use $n = |V|$ and $k = |S|$ to denote the desired number of vertices and hyperedges respectively.
\begin{enumerate}
  \item \label{alg:degrees} Initialize an array $D[1 \ldots k]$ such that $\sum_{i = 1}^k D[i] = n$, in which $D[i]$ indicates that we wish to generate $D[i]$ vertices of degree $i$. To this end, we define four schemes, where we always restrict the degrees to be between $1$ and $k$.
   \begin{description}
      \item[\EVEN] All degrees occur equally frequently. If $n \!\!\mod k \neq 0$, then degrees one through $n \!\!\mod k$ occur once more than the others.
      \item[\MID] We generate $n$ random degrees using a normal distribution. We draw a random value $g$ from $\mathcal{N}(0.5, 2/9)$ and map this to degree $1 + \lfloor k g \rfloor$. The distribution of degrees is expected to look like a Gaussian curve with its peak on $k/2$.
      \item[\LOW] Similar to the \MID scheme, we draw a random value $g$ from $\mathcal{N}(0, 2/5)$ and map this to degree $1 + \lfloor k |g| \rfloor$. The distribution of degrees is expected to look like a Gaussian curve with its peak on $1$.
      \item[\HIGH] Similar to the \MID scheme, we draw a random value $g$ from $\mathcal{N}(0, 2/5)$ and map this to degree $k - \lfloor k|g| \rfloor$. The distribution of degrees is expected to look like a Gaussian curve with its peak on $k$.
    \end{description}
  \item \label{alg:ensure1all} If $D[k] = 0$, decrease the maximal degree $i$ for which $D[i] > 0$ by one and set $D[k]$ to one.
  \item \label{alg:ensure2k} While $\sum_{i = 1}^k i \cdot D[i] < 2k$, decrease the minimal degree $i$ for which $D[i] > 0$ by one and increase $D[i+1]$ by one.
  \item \label{alg:genvertices} While $\sum D[i] > 0$, let $i$ be a degree such that $D[i] > 0$, chosen uniformly at random. Generate a vertex $v$ with a uniformly random position in a square of width 100 and add it to $V$. Pick $i$ hyperedges uniformly at random from those hyperedges that have less than two vertices; if there are no such hyperedges left, pick from all hyperedges instead. Decrease $D[i]$ by one.
\end{enumerate}
To explain the four steps in this algorithm, we treat them in reverse order.
\begin{enumerate}
  \nextitemref{alg:genvertices}
  \item We generate all desired $n$ vertices and assign them to hyperedges.
    We first pick from those hyperedges that have less than two vertices, to ensure that each hyperedge contains at least two vertices.
    This ensures that all hyperedges have influence on the support.
    We pick a random degree, to avoid biasing small hyperedges towards low degree or high degree vertices.

  \nextitemref{alg:ensure2k}
  \item We ensure that the sum over all degrees (over all nodes) is at least $2k$.
    We need this lower bound on the sum of degrees, to ensure that we are able to pick at least two vertices for every hyperedge.

  \nextitemref{alg:ensure1all}
  \item We ensure that there is at least one vertex that occurs in all hyperedges; this step is optional but necessary to ensure that our local search algorithm can be initialized. It guarantees that a planar solution exists, see Section~\ref{sec:localsearch}.

  \nextitemref{alg:degrees}
  \item We decide on the distribution over the degrees.
    That is, how many vertices shall we have of degree $i$? This can be done according to various schemes. The four schemes used in this paper are described in the main text.
\end{enumerate}

\clearpage
\subsection{Experiment 1}
\label{app:exp_heuristics}

\begin{figure}[htp]
  \centering
  \makebox[\textwidth][c]{\includegraphics[scale=0.48, angle=90]{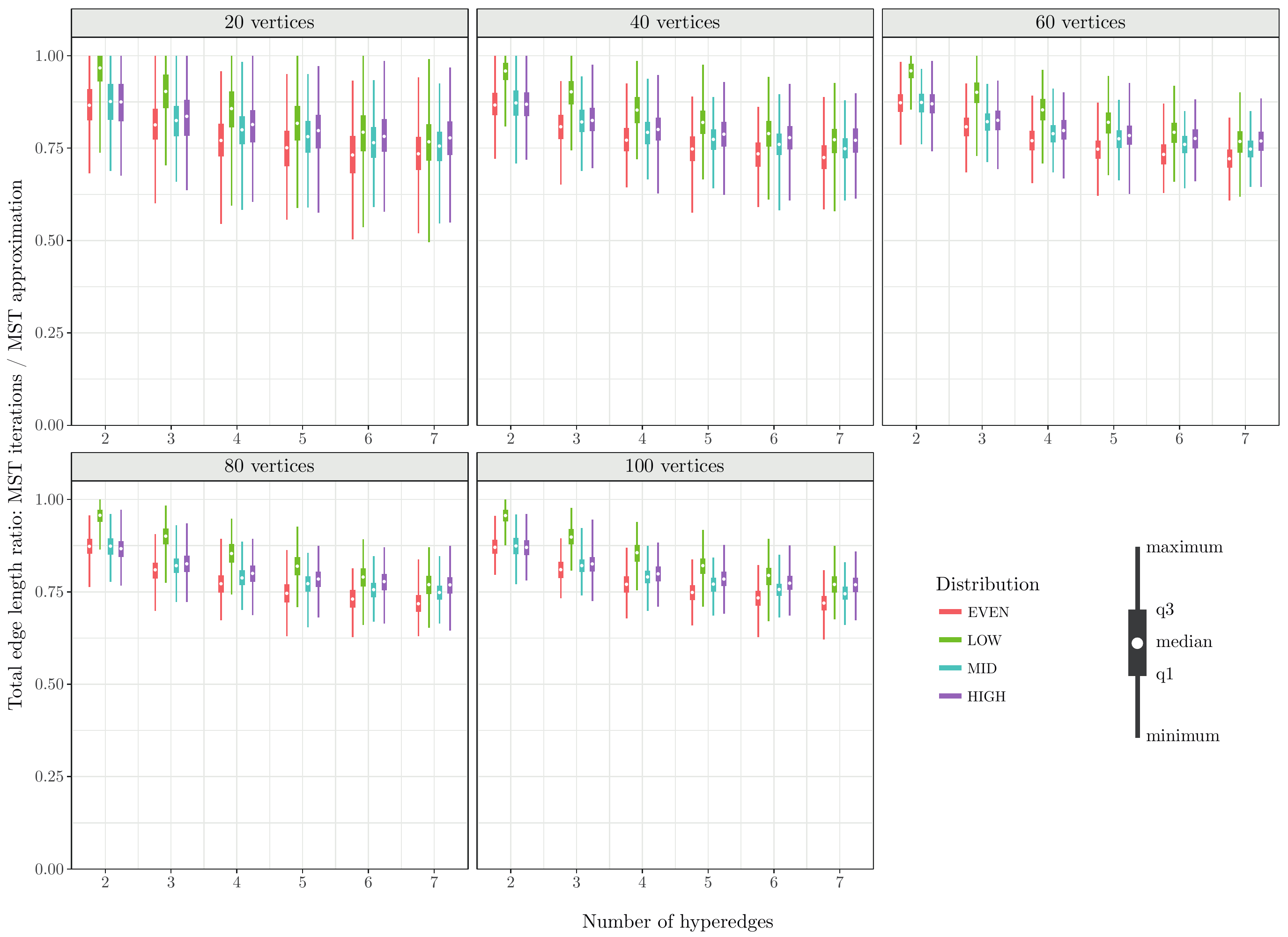}}
  \caption{Ratio of the support length computed by \textsc{MSTIteration} as a fraction of \textsc{MSTApproximation}. Lower value indicate a higher gain of the iteration method.}
  \label{fig:app_mst-ratio}
\end{figure}

\begin{figure}[htp]
  \centering
  \makebox[\textwidth][c]{\includegraphics[scale=0.48, angle=90]{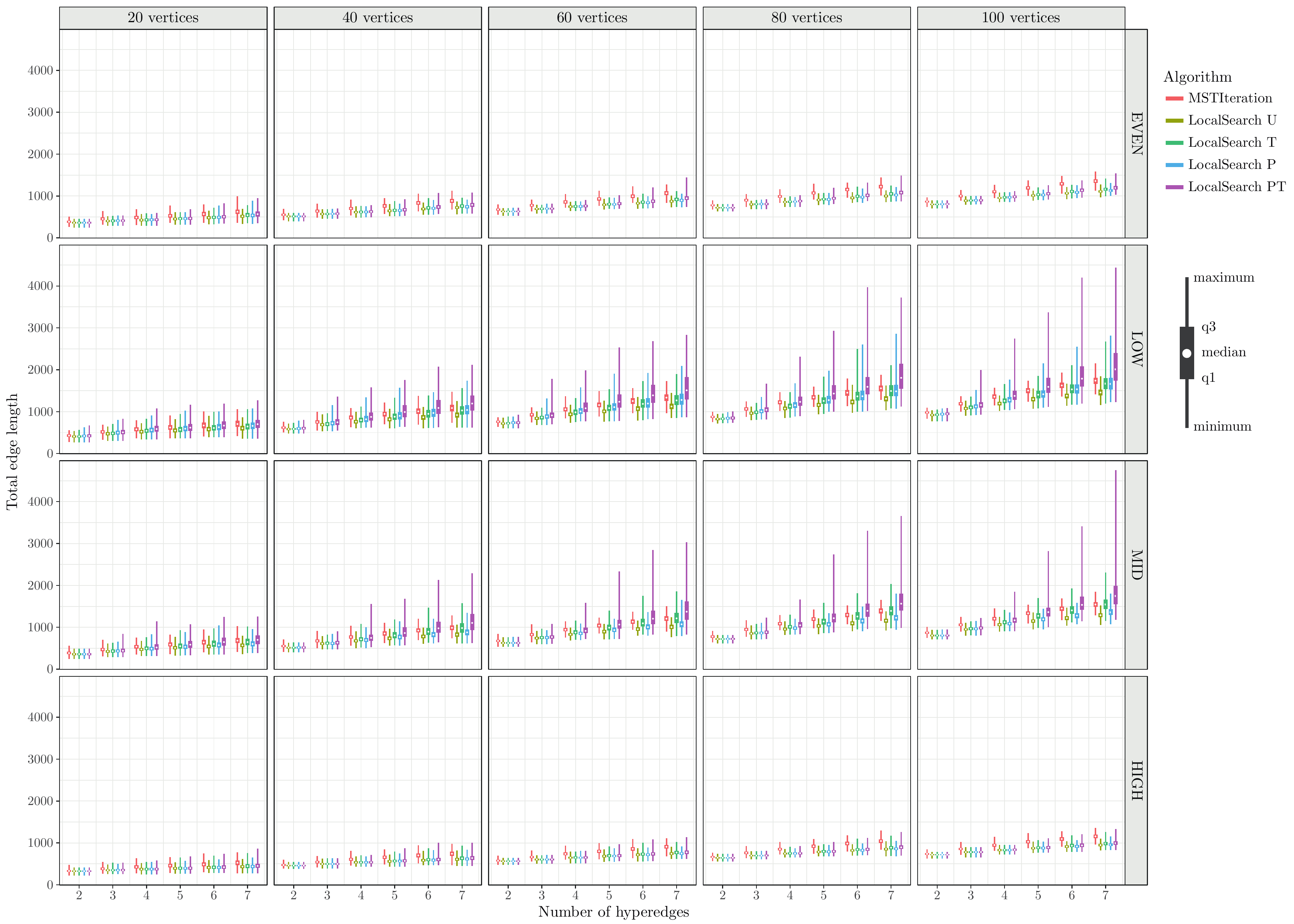}}
  \caption{Support length computed by the various algorithms for varying values of $n$, $k$ and~$d$.}
  \label{fig:app_length-overview}
\end{figure}

\begin{figure}[htp]
  \centering
  \makebox[\textwidth][c]{\includegraphics[scale=0.48, angle=90]{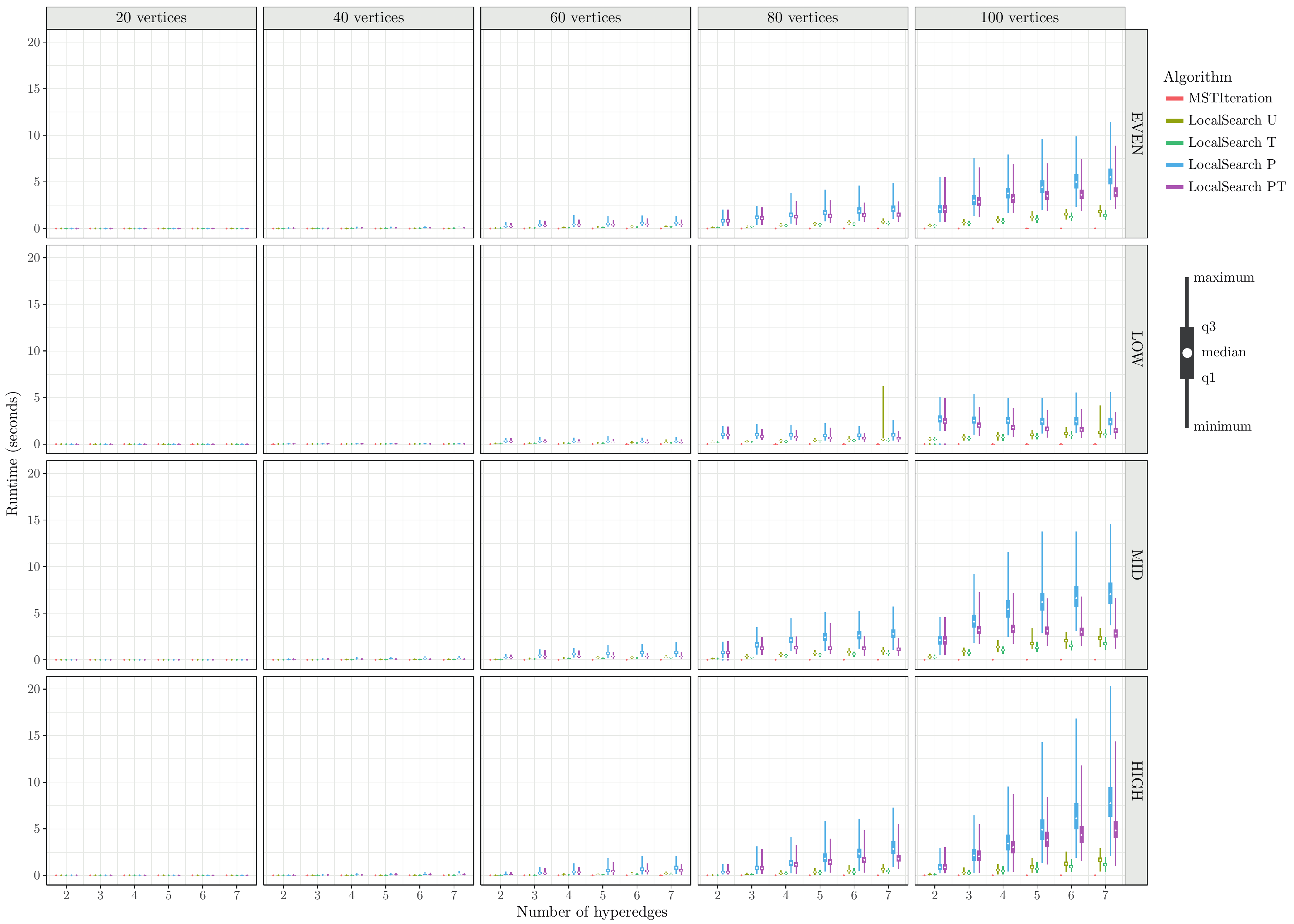}}
  \caption{Computation time of the various algorithms for varying values of $n$, $k$ and~$d$.}
  \label{fig:app_time-overview}
\end{figure}

\clearpage
\subsection{Experiment 2}
\label{app:exp_opt}

\subsubsection{Failed trials}
CPLEX was allocated 24GB of RAM and 64GB of file storage.
Nonetheless, the CPLEX computation would run out of memory and therefore not finish successfully for some cases with $n=20$.
We have therefore ran $1730$ trials for each of the four conditions ($k \times d$) with four settings for \textsc{Opt}; $941$ runs out of these $27,680$ runs failed.
This is shown in Table~\ref{tab:fails}.
We filtered out erroneous trials, leaving 1138 trials, 1000 of which were used for the analysis of the results to match the cases for $n=10,15$.
This may bias the results towards only including the easier cases on which CPLEX was successful; this should be taken into consideration for the upcoming results discussion.
To localize and quantify this bias, we counted which conditions failed and, for each condition, measured the average length of the \textsc{LocalSearch} results in the successful and failed trials (see Table~\ref{tab:fails}).
We note that the tree and plane tree cases are impacted most significantly.
We also see that the ratio is mostly well above one, suggesting that indeed the more difficult cases have now been excluded from the analysis.

\begin{table}[hb]
  \centering
  \caption{Number of failed trials for $n = 20$ per condition. Ratio indicates the average length of \textsc{LocalSearch} on failed trials, divided by the average length of \textsc{LocalSearch} on successful trials.}
  \label{tab:fails}
  \begin{tabular}{*{6}{rr}}
    \toprule

    \multicolumn{2}{c}{} & \multicolumn{2}{c}{\textsc{Opt U}} & \multicolumn{2}{c}{\textsc{Opt T}} &
    \multicolumn{2}{c}{\textsc{Opt P}} & \multicolumn{2}{c}{\textsc{Opt PT}} & \multicolumn{2}{c}{all} \\

    \multicolumn{1}{r}{$k$} & \multicolumn{1}{c}{$d$} & \multicolumn{1}{c}{count} &
    \multicolumn{1}{c}{ratio} & \multicolumn{1}{c}{count} & \multicolumn{1}{c}{ratio} &
    \multicolumn{1}{c}{count} & \multicolumn{1}{c}{ratio} & \multicolumn{1}{c}{count} &
    \multicolumn{1}{c}{ratio} & \multicolumn{1}{c}{count} & \multicolumn{1}{c}{ratio} \\

    \cmidrule(r){1-2}\cmidrule(lr){3-4}\cmidrule(lr){5-6}\cmidrule(lr){7-8}\cmidrule(lr){9-10}\cmidrule(l){11-12}

    2 & \multicolumn{1}{c}{\LOW} & 2 & 1.11 & 7 & 1.26 & 2 & 1.12 & 15 & 1.23 & 26 & 1.22 \\
      & \multicolumn{1}{c}{\MID} & 7 & 1.10 & 3 & 1.26 & 7 & 1.10 & 5 & 1.22 & 22 & 1.15 \\
    3 & \multicolumn{1}{c}{\LOW} & 0 &  & 61 & 1.20 & 0 &  & 264 & 1.26 & 325 & 1.25 \\
      & \multicolumn{1}{c}{\MID} & 18 & 1.13 & 169 & 1.18 & 20 & 1.11 & 361 & 1.23 & 568 & 1.20 \\

    \cmidrule(r){1-2}\cmidrule(lr){3-4}\cmidrule(lr){5-6}\cmidrule(lr){7-8}\cmidrule(lr){9-10}\cmidrule(l){11-12}

    \multicolumn{2}{c}{all} & 27 & 1.09 & 240 & 1.24 & 29 & 1.07 & 645 & 1.33 & 941 & 1.29 \\

    \bottomrule
  \end{tabular}
\end{table}

\end{document}